\def\verbatim@font{\linespread{1}\normalfont\ttfamily}
\newtheorem{assumption}{Assumption}
\newtheorem{definition}{Definition}
\newtheorem{theorem}{Theorem}
\newtheorem{proposition}{Proposition}
\newtheorem{theorem*}{Theorem}
\newtheorem{proposition*}{Proposition}
\newtheorem{corollary*}{Corollary}
\newtheorem{lemma}{Lemma}
\newtheorem{remark}{Remark}
\newcommand{\pr}{\textup{pr}}
\title{Randomization inference for composite experiments with spillovers and peer effects}%
\author{Hui Xu, Guillaume Basse}
\begin{document}

\maketitle

\pagenumbering{gobble}

\begin{abstract}
 Group-formation experiments, in which experimental units are randomly
  assigned to groups, are a powerful tool for studying peer effects in the
  social sciences. Existing design and analysis approaches allow researchers
  to draw inference from such experiments without relying on parametric
  assumptions. In practice, however, group-formation experiments are often
  coupled with a second, external intervention, that is not accounted for
  by standard nonparametric approaches. This note shows how to construct
  Fisherian randomization tests and Neymanian asymptotic confidence
  intervals for such composite experiments, including in settings where
  the second intervention exhibits spillovers. We also propose an
  approach for designing optimal composite experiments.

\medskip 

\noindent {\it Keywords:} Causal inference; Conditional randomization test; Exact $p$-value; Non-sharp null hypothesis; Orbit-Stabilizer Theorem
\end{abstract}


\singlespacing
\clearpage
\pagenumbering{arabic}


\section{Introduction}
When studying social systems and organizations, quantitative researchers are
often interested in whether the behavior of an individual is affected by the
characteristics of other individuals in the system: this phenomenon is called a
peer effect. A common approach for studying peer effects is the so-called
group-formation experiment, whereby units are randomly split into groups. An
early example is a study conducted by \citet{sacerdote2001peer}, who leveraged
the random assignment of roommates at Dartmouth to assess whether the drinking
behavior of freshmen affected that of their roommates. In recent work,
\citet{li2019randomization} and \citet{basse2019randomization} developed a
framework for designing and analyzing these types of experiments in a randomization-based
framework; that is, without assuming a response model for the outcomes, and
relying on the random assignment as the sole basis for inference.

However, group-formation experiments are often coupled with an additional
intervention to form what we call a composite experiment: typically, units would
first be split into groups, then a treatment would be randomized to a subset of
the individuals in the experimental population. For instance, in their study of peer-effects in
the context of the spread of managerial best practices, \citet{cai2018interfirm}
randomized the managers of different-sized firms into groups, then provided a
random subset of the managers with special information. Similarly, in a study of
student learning, \citet{kimbrough2017peers} first randomized students into
groups of homogeneous or heterogenous ability, then allowed a random subset of
students to practice a task with other students in their group. Without the
second interventions, both studies would be simple group-formation experiments,
and could be analyzed with the framework of
\citet{basse2019randomization}. Similarly, if one conditions on the group
composition, then the second part of the composite experiment is just a
classical randomized experiment, and the effects of interest fit in the usual
causal inference framework.

This article shows how to study jointly the peer-effects and causal effects of
a composite experiment. Our key insight is that the effects of both
the group-formation and the additional intervention can be summarized into
an exposure, or effective treatment. In particular, this approach allows us
to accomodate the fact that the second intervention may exhibit spillover
effects. Building on the group theoretical framework of \citet{basse2019randomization},
we propose a class of designs that is amenable both to inference in the Neyman
model for exposure contrasts, and to conditional randomization tests that can
be implemented with simple permutations of the exposures. Within that class of
designs, we derive optimal designs by solving a simple integer programming
problem: in some simulation settings, we found that optimal designs increase
the power by 80 percentage points over valid but more naive designs.


\section{Setup and framework}
\label{section:setup}

\subsection{Composite experiments and potential outcomes framework}
\label{section:notation}

We consider $N$ units indexed by $i = 1, \ldots, N$, each with a fixed attribute
$A_i \in \mathcal{A}$, which are assigned to two successive interventions.  In
the first intervention, the group-formation intervention, the $N = m \times K$
units are randomly assigned to $K$ distinct groups of equal size $m$. Following
\citet{basse2019randomization}, we denote by $L_i \in \{1,\ldots,K\}$ the group
to which unit $i$ is assigned, and denote by $L = (L_1, \ldots, L_N)$ the group
assignment vector. For each group assignment vector $L$, define the neighbor
assignment vector $Z(L) = (Z_1(L),\ldots, Z_N(L))$, where
$Z_i(L) = \{j \in \mathbb{I} \setminus \{i\}: L_i = L_j\}, \forall i \in
\mathbb{I}.$ To simplify the notation, the dependence of $Z$ on $L$ will often
be omitted. In the second intervention, the treatment intervention, units are
randomly assigned to a treatment, with $W_i \in \mathcal{W}$ being a treatment
indicator for unit $i$ and $W = (W_1, \ldots, W_N)$ the treatment assignment
vector. We denote by $Y_i(Z, W)$ the potential outcome of unit $i$ which, a
priori, may depend on the entire group assignment vector $Z$ and treament
assignment vector $W$. Throughout, we will adopt the randomization-based
perspective, considering the potential outcomes as fixed quantities, the
randomness coming exclusively from $Z$ and $W$.  \citet{basse2019randomization}
studied the group-formation intervention, with no treatment intervention. In
contrast, the bulk of the literature on interference in causal inference focuses
on treatment interventions, without group-formation. Our setting combines both,
as summarized in the left panel of Figure~\ref{setup}, and allows us to address
a broader type of questions, as illustrated in the following examples.

\begin{figure}[!h]
\begin{minipage}[b]{0.55\textwidth}
\includegraphics[width=\textwidth]{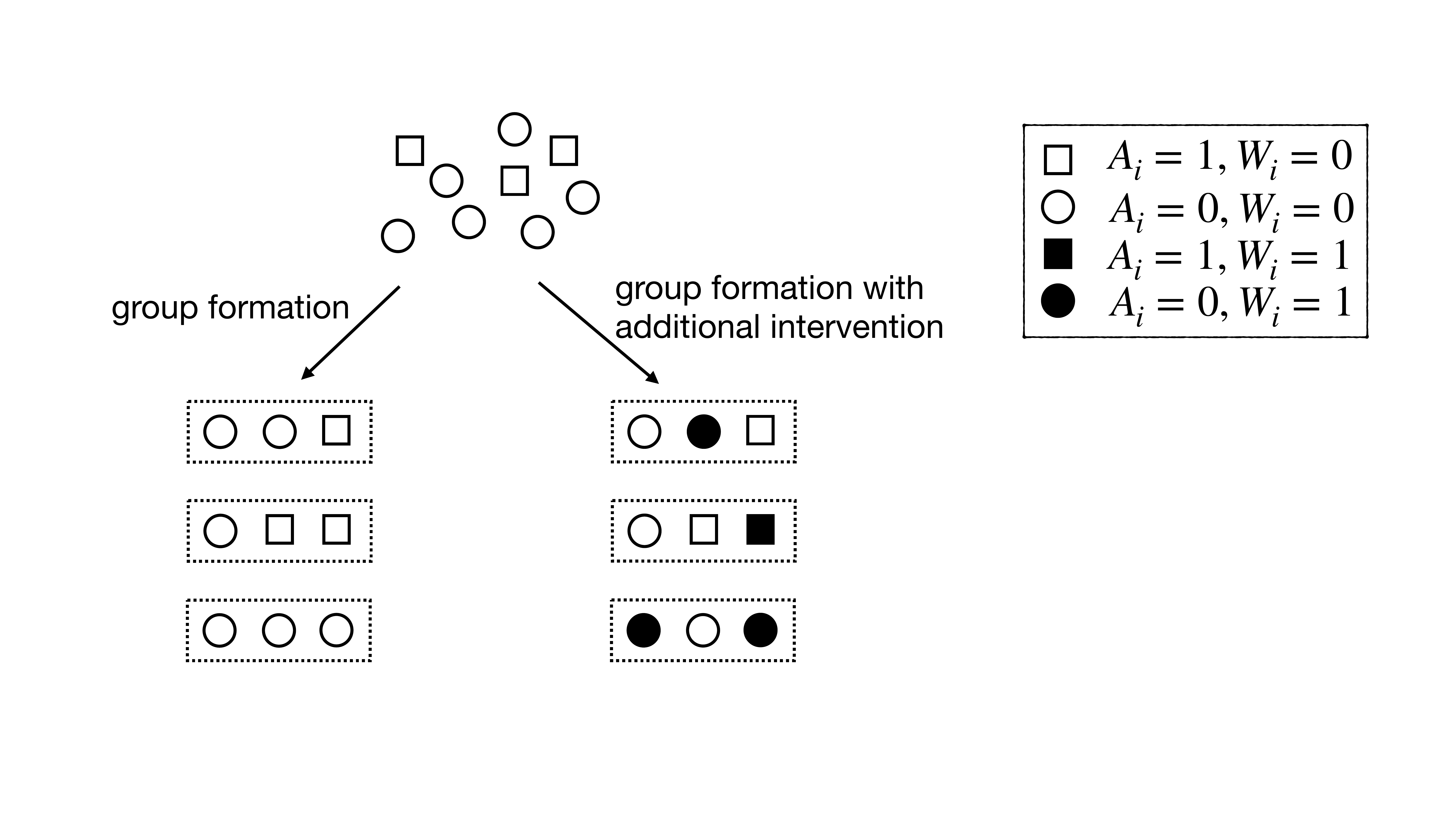}
\label{setup}
\end{minipage}
\begin{minipage}[b]{0.45\textwidth}
\includegraphics[width=\textwidth]{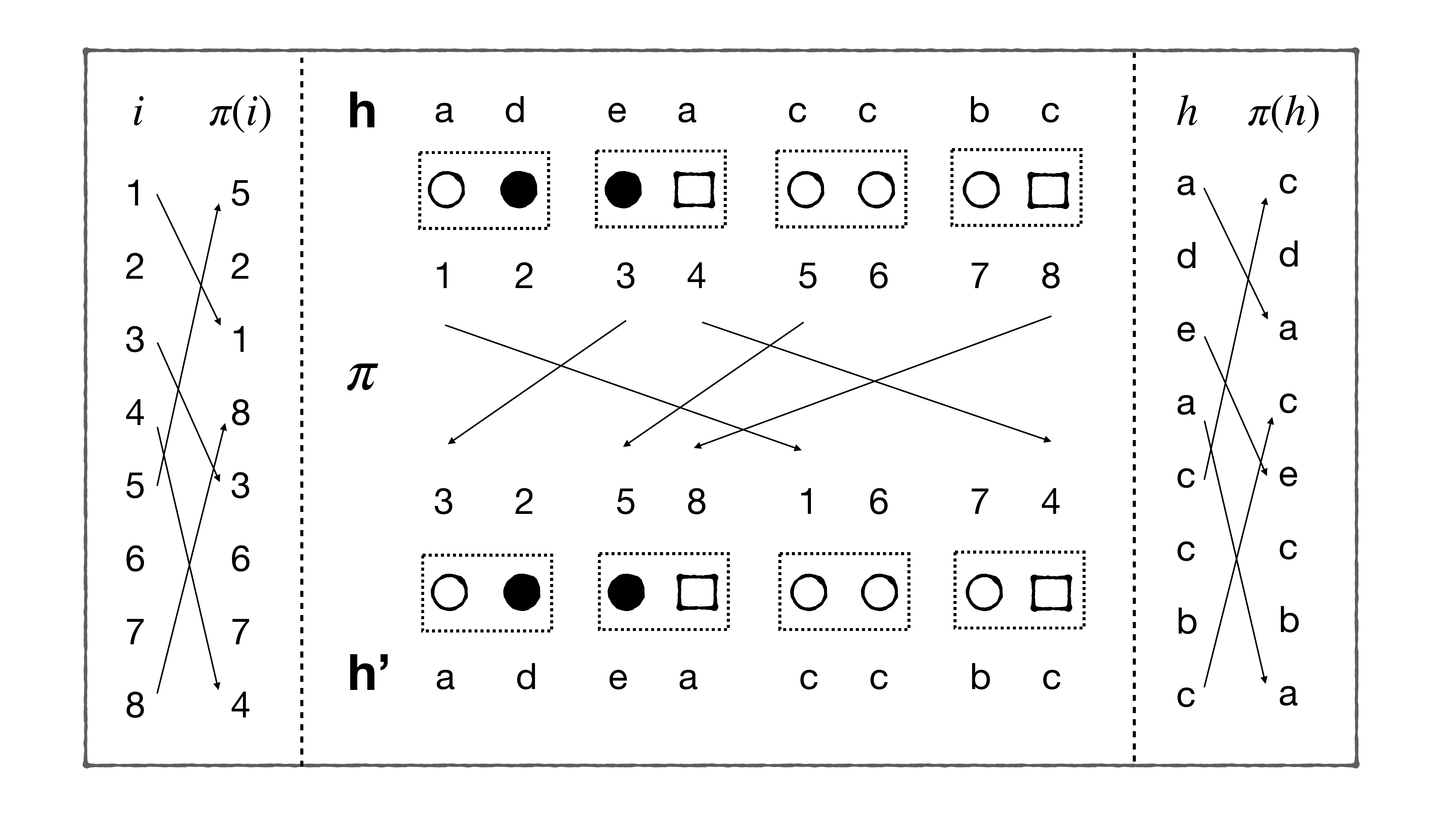}
\label{setup}
\end{minipage}
\caption{Left panel summarizes our composite experiment setting. Right panel illustrates our
  randomization procedure, as described in Definition~\ref{def:rand-procedure}.}
\label{setup}
\end{figure}

\medskip

\noindent \textbf{Example 1:}
In the managerial setting of \citet{cai2018interfirm} described
in the introduction, the attribute set $\mathcal{A}$ contains all the combinations
of size and sector for the firms, and $W_i$ is an indicator for whether the manager
of firm $i$ received special financial information.

\medskip

\noindent \textbf{Example 2:}
In the educational context of \cite{kimbrough2017peers} we mentioned earlier,
the attribute set $\mathcal{A}$ contains the different levels of student ability,
and $W_i$ is an indictor for whether student $i$ was allowed to practice a task
with another student.

\subsection{Exposure}

The potential outcomes notation $Y_i(Z, W)$ highlights the fact that the outcome of unit $i$
may depend on the group membership of all units, $Z$, as well as the treatment assigned to
all units, $W$. In practice, it is often reasonable to assume that the outcome of unit $i$
only depends on the treatments and attributes of the units in the same group as unit $i$;
that is, the outcome of unit $i$ depends on $Z$ and $W$ only through the function $h_i(Z, W)$
defined as:
\begin{equation}\label{eq:exp-mapping}
  h_i(Z, W) = (\Gamma_i, W_i)
\end{equation}
where $\Gamma_i = \{(A_j, W_j) : j \in Z_i\}$. In the pure group-formation intervention,
as well as in the pure treatment intervention settings, a collection of functions $h_i$
summarizing $Z$ or $W$ is called an exposure mapping: we will adopt this terminology
as well. The local dependence captured by the specification of \eqref{eq:exp-mapping}
generalizes the concept of partial interference \cite{} which, in our context, can
be formulated as follows:
%
\begin{assumption}\label{assumption}
  Let $\{h_i\}$ be as in \eqref{eq:exp-mapping}. For all $i = 1,\ldots, N$, the
  following holds:
  \begin{equation*}
    \forall (Z,W), (Z',W'),
    \qquad h_i(Z, W) = h_i(Z', W') \quad \Rightarrow \quad Y_i(Z,W) = Y_i(Z',W').
  \end{equation*}
  With a slight abuse of notation, we will write $Y_i(Z, W) = Y_i(\Gamma_i, W_i)$.
\end{assumption}
If we think of the pair $(Z, W)$ as the intervention, the exposure
$H_i = (\Gamma_i, W_i)$ can be thought of as the effective intervention, since it
captures the part of $(Z, W)$ that actually affects the outcome of unit $i$. When
both the attribute set $\mathcal{A}$ and treatment set $\mathcal{W}$ are binary, the
exposure of \eqref{eq:exp-mapping} simplifies to:
\begin{equation}\label{eq:binary-exposure-mapping}
  h_i(Z, W) = (\Gamma_i, W_i)
  = (\sum_{j\in Z_i} A_j, \sum_{j\in Z_i} Z_j, \sum_{j\in Z_i} A_jW_j, W_i)
\end{equation}
so the exposure $H_i$ can be summarized by a simple quadruple of values.

In practice, further restrictions of the exposure may be considered. For instance, one
may assume that the interaction term $\sum_{j\in Z_i} A_j W_j$ does not affect the
outcome, and can be removed from the exposure. While our results are derived for the
more general exposure, they can be shown to hold for this simplified exposure as well. 

\subsection{Causal estimands and null hypotheses}

We will consider two types of inferential targets, requiring two different
approaches to inference. First we will consider causal estimands defined as
average contrasts between different exposures. Specifically, for $\{h_i\}_{i=1}^n$
defined as in \eqref{eq:exp-mapping}, let $\mathcal{H}$ be the set of all
values that the exposures can take; since $H_i = (\Gamma_i, W_i)$, each element
$k \in \mathcal{H}$ will be of the form $k = (\gamma, w)$. We consider the
average exposure contrast between $k,k'\in\mathcal{H}$, defined as
$\tau(k, k') = N^{-1}\sum_{i=1}^N\{Y_i(k) - Y_i(k')\}$, as well as the
attribute-specific counterpart defined as
$\tau_{[a]}(k, k') = N_{[a]}^{-1} \sum_{i: A_i=a} \{Y_i(k) - Y_i(k')\}$,
where $N_{[a]}$ is the number of units with attribute $A_i = a$.
Two special cases of these estimands deserve a brief mention. If $k = (\gamma, w)$
and $k' = (\gamma', w')$ are such that $w = w'$, then the estimand focuses on the
effect of peer's attributes and treatments. If $k$ and $k'$ are such that
$\gamma' = \gamma$, then the estimand focuses on the effect of each unit's treatment,
for fixed levels of peer attributes and peer treatments. Second, we will consider
two types of null hypotheses. The global null hypothesis
\begin{equation*}
H_0: Y_i(\gamma_1,w_1) = Y_i(\gamma_2,w_2), \forall (\gamma_1,w_1),(\gamma_2,w_2) \in \mathcal{H}, \forall i \in \mathbb{I}
\end{equation*}
asserts that the combined intervention has no effect whatsoever on any unit. Of more
practical interest are pairwise null hypotheses of the form
\begin{equation*}
H_0^{(\gamma_1,w_1),(\gamma_2,w_2)}: Y_i(\gamma_1,w_1) = Y_i(\gamma_2,w_2), \forall i \in \mathbb{I}.
\end{equation*}
The global null hypothesis can be easily tested with a standard Fisher Randomization
Test so we discuss it only in the Supplementary Material. We will focus instead on
pairwise null hypotheses, which are more difficult to test since they
are not sharp; that is, under the pairwise null, the observed outcomes do not determine
all the potential outcomes.

\subsection{Assignment mechanism and challenges}

In section~\ref{section:notation}, we stated that both the group assignment $L$ and the treatment $W$ were
assigned at random, but so far we have not discussed their distribution $\pr(L, W)$. In a
randomization-based framework, this distribution is the sole basis for inference, and must
be specified with care.

Building on an insight from \citet{basse2019randomization}, notice that if we assume that the
outcome of unit $i$ depends on $Z$ and $W$ only through the exposure
$H_i = h_i(Z(L), W)$, the problem reduces to a multi-arm trial on the exposure scale.
In particular, instead of $\pr(L, W)$, one should focus on $\pr(H)$, the distribution of
the exposure induced by $\pr(L, W)$.
If the distribution of $\pr(H)$ is simple, estimating exposure contrasts and testing
pairwise null hypotheses is straightforward. Unfortunately, the experimenter
can manipulate $\pr(H)$ only indirectly, via $\pr(L, W)$. The key objective of this
paper is to construct a class of designs $\pr(L, W)$ that induce simple exposure
distributions $\pr(H)$; specifically, we focus on designs for which the exposure has
a Stratified Completely Randomized Design.

\begin{definition}
\label{SCRD}
Without loss of generality, denote by $\mathcal{H}$ the set of possible exposures and $A$ an N-vector. Let $\mathbf{n}_A = (n_{a,h})_{a \in \mathcal{A}, h \in \mathcal{H}}$, such that $\sum_{h\in \mathcal{H}} n_{a,h} = N_{[a]}$, denote a vector of non-negative integers corresponding to number of units with each possible attribute and exposure combination. We say that a distribution of exposures $\pr(H)$ is a stratified completely randomized design denoted by $\operatorname{SCRD}(\mathbf{n}_{A})$ if the following two conditions are satisfied. 
\begin{enumerate}
\item After stratifying based on $A$, the exposure $H = (H_1,\ldots, H_N)$ is completely randomized. That is, (1) $\mathbb{P}(H_i = h) = \mathbb{P}(H_j = h)$ for all $h \in \mathcal{H}$ and $i,j \in \mathbb{I}$ such that $A_i = A_j$; (2) the number of units with exposure $h \in \mathcal{H}$ and stratum $a \in \mathcal{A}$ is $n_{a,h}$.
\item The exposure assignments across strata are independent. That is $\mathbb{P}(H_i = h_i | H_j = h_j) = \mathbb{P}(H_i = h_i)$ for all $h_i, h_j \in \mathcal{H}$ and $i,j \in \mathbb{I}$ such that $A_i \neq A_j$. 
\end{enumerate}
\end{definition}

This design is simple for two reasons. First, it is easy to sample from: this makes it
possible to perform suitably adapted Fisher Randomization Tests, a task that would
otherwise be computationally intractable \citep{basse2019randomization}. Second, it makes it
possible to obtain inferential results for standard estimators such as the difference in means.

\section{Randomization procedure and main theorem}
\label{randomization}

Our main result builds on the theory developed by \cite{basse2019randomization}, and can be summarized
in one sentence: if the design $\pr(L, W)$ has certain symmetry properties, so
will the exposure distribution $\pr(H)$. The right notion of symmetry can be
formulated using elementary concepts from algebraic group theory. 

Recall that a permutation of $\mathbb{I} = \{1,\ldots,N\}$ can be represented as
a one-to-one mapping from $\mathbb{I}$ to $\mathbb{I}$. The symmetric group
$S$ is the set of all permutations of $\mathbb{I}$. Let $C_i = (L_i, W_i)$ and
$C = (C_1, \ldots, C_N) \in \mathbb{C}$. If $\pi \in S$, we denote by
$\pi \cdot C = (C_{\pi^{-1}(i)})_{i=1}^N = (L_{\pi^{-1}(i)}, W_{\pi^{-1}(i)})_{i=1}^N$
the operation of permuting the elements of $C$. This mathematical operation called
a group action is defined more formally in the Supplement. Finally, if $C \in \mathbb{C}$,
and $\Pi \subseteq S$ is a subgroup of $S$, we define the stabilizer group of
$C$ in $\Pi$ as $\Pi_C = \{\pi \in \Pi \,:\, \pi \cdot C = C\}$. We can now
introduce our proposed procedure:

\begin{definition}\label{def:rand-procedure}
\label{procedure}
Given an observed attribute vector $A = (A_1,\ldots, A_N)$, consider the following randomization procedure. 
\begin{enumerate}
\item Initialize $C_0 = (W_0$, $L_0) \in \mathbb{C}$.
\item Permute $C = \pi \cdot C_0$, where $\pi \sim \operatorname{Unif}(S_A)$ 
\end{enumerate}
\end{definition}
Given a choice of $C_0 = (W_0, L_0)$ this procedure yields a design $\pr(L, W)$
with two important properties. First, it is easy to sample from: drawing random
permutations from $S_A$ and applying them to a vector $C_0$ can be done in just
three lines of efficient $R$ code, without requiring additional packages
\citep{basse2019randomization}. Second, it induces a simple exposure distribution, as
formalized by Theorem~\ref{main} below. The choice of $C_0$ is important in
practice, and is discussed in details in Section~\ref{design}.
\begin{theorem}
\label{main}
If $\pr(C)$ is generated from the randomization procedure in Definition \ref{procedure}, then the induced distribution of exposure $\pr(H)$ is $SCRD(\mathbf{n}_{A})$.
\end{theorem}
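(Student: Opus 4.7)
The plan is to prove Theorem \ref{main} via an equivariance argument combined with the block structure of $S_A$. Concretely, I will show that the exposure map $C \mapsto H(C)$ commutes with the $S_A$-action, i.e., $H(\pi \cdot C) = \pi \cdot H(C)$ for every $\pi \in S_A$. Once this is established, the uniform draw $\pi \sim \operatorname{Unif}(S_A)$ in Definition \ref{procedure} transfers directly to the exposure scale, and the product decomposition of $S_A$ supplies the two conditions defining $SCRD$, with the count vector $\mathbf{n}_A$ read off from the initial configuration $C_0$.

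The technically delicate step is establishing equivariance. Fix $\pi \in S_A$, set $C' = \pi \cdot C_0$, and let $i' = \pi^{-1}(i)$. By the definition of the group action, $(L'_i, W'_i) = (L_{0,i'}, W_{0,i'})$. A short calculation on the permuted neighbor set gives
\[
Z'_i = \{j \neq i : L'_i = L'_j\} = \pi(Z_{0,i'}),
\]
so that
\[
\Gamma'_i = \{(A_j, W'_j) : j \in \pi(Z_{0,i'})\} = \{(A_{\pi(k)}, W_{0,k}) : k \in Z_{0,i'}\}.
\]
The key observation is that $\pi \in S_A$ fixes the attribute vector, so $A_{\pi(k)} = A_k$ for every $k$. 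This collapses $\Gamma'_i$ to $\Gamma_{0,i'}$, and combined with $W'_i = W_{0,i'}$ yields $H'_i = H_{0,i'} = (\pi \cdot H_0)_i$, as claimed.

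With equivariance in hand, let $H_0 = H(C_0)$, so that $H = H(\pi \cdot C_0) = \pi \cdot H_0$ in distribution. Because every element of $S_A$ fixes $A$, the stabilizer factors as a direct product $S_A = \prod_{a \in \mathcal{A}} S_{[a]}$, where $S_{[a]}$ is the symmetric group on $\{i : A_i = a\}$; a uniform draw from $S_A$ is therefore a tuple of independent uniform draws, one per stratum. Applying such a product permutation to $H_0$ produces, within stratum $a$, a uniformly random rearrangement of $(H_{0,i})_{A_i = a}$, which preserves the counts $n_{a,h} = |\{i : A_i = a, H_{0,i} = h\}|$ and yields a joint distribution independent across strata. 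These are precisely conditions (1) and (2) of Definition \ref{SCRD}.

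The main obstacle is the index bookkeeping in the equivariance step: the permutation simultaneously reshapes group membership (and hence the neighbor sets) and the treatments, and one must verify that the net effect on $\Gamma_i$ is a clean relabeling by $\pi$. The identity $A_{\pi(k)} = A_k$, which relies crucially on restricting to $S_A$, is what makes the argument close; without it, the neighbor attributes would be permuted in an uncontrolled way and equivariance would fail.
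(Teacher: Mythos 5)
Your proof is correct, and the first half coincides with the paper's own argument: the equivariance identity $H(\pi\cdot C)=\pi\cdot H(C)$ for $\pi\in S_A$ is exactly the paper's Lemma on equivariance, established by the same index computation (permuted neighbor set equals $\pi(Z_{0,\pi^{-1}(i)})$, then the relation $A_{\pi(k)}=A_k$ collapses the attribute labels). Where you diverge is in the second half. The paper proves that $H=\pi\cdot H_0$ with $\pi\sim\operatorname{Unif}(S_A)$ is $\operatorname{SCRD}(\mathbf{n}_A)$ by a two-part argument: it shows that the restriction of a uniform element of $S_A$ to each stratum $I_a=\{i:A_i=a\}$ is uniform on the symmetric group of $I_a$ via an Orbit--Stabilizer counting argument applied to an auxiliary labelling vector $L^k$, and then establishes cross-stratum independence by a separate Bayes-rule calculation. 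You instead invoke the direct-product decomposition $S_A\cong\prod_{a\in\mathcal{A}} S_{[a]}$, under which the uniform measure on $S_A$ is the product of the uniform measures on the factors; both SCRD conditions (within-stratum complete randomization with counts $n_{a,h}$ read off from $H_0$, and independence across strata) then follow in one stroke. Your route is more elementary and shorter, at the cost of leaving the bijectivity of the restriction map $S_A\to\prod_a S_{[a]}$ implicit (it holds because any tuple of within-stratum permutations glues to an attribute-preserving permutation and conversely); the paper's machinery is heavier here but is reused later, since the Orbit--Stabilizer apparatus is what powers the conditional result in Theorem~\ref{main_2}, where the relevant stabilizer $S_{AU}$ again decomposes by strata of the refined label $(A_i,U_i)$ and the same lemma applies verbatim.
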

This result underpins the inferential approaches we describe in
Section~\ref{section:inference}. If the treatment intervention vector $W_0$
is degenerate, i.e it is a vector of $0$ values, then permuting $C_0$ is equivalent
to permuting $L_0$ only and Theorem \ref{main} reduces to Theorem 1
in \citet{basse2019randomization}.

\section{Inference}
\label{section:inference}

\subsection{Estimating the average exposure contrast}

Under Assumption~\ref{assumption}, our combined experiment can be thought of as a
multi-arm trials on the exposure scale. If the groups $L$ and treatment $W$ are
assigned according to Definition~\ref{procedure}, then Theorem~\ref{main} states
that this multi-arm trial follows a completely randomized design, stratified on
the attribute $A$.
Estimation and inference for average exposure contrast therefore follows immediately
from standard results in the randomization-based inference literature \cite{li2017general}. For
any $a \in \mathcal{A}$, and $k \in \mathcal{H}$, define

\begin{equation*}
  \hat{Y}_{[a]}(k) = \frac{1}{n_{[a](k)}} \sum_{i: A_i = a, H_i = k} Y_i,
\end{equation*}
the average outcome for units with attribute $A_i = a$ who receive the exposure $H_i = k$,
where $n_{[a]}(k) = |\{i \in \mathbb{I}: A_i = a, H_i = k\}|$. Consider
$\hat{\tau}_{[a]}(k, k') = \hat{Y}_{[a]}(k) - \hat{Y}_{[a]}(k')$ the difference-in-means
estimator within stratum $a$, and the stratified estimator
$\hat{\tau}(k, k') = \sum_{a\in\mathcal{A}} (n_{[a]}/ n) \hat{\tau}_{[a]}(k,k')$.
Theorem~\ref{asymptotics} summarizes their well-studied properties (see also Theorem~3 of
\citet{li2019randomization}).

\begin{theorem}
\label{asymptotics}
Under the randomization procedure in \ref{procedure}, and standard regularity conditions,
then for any $a \in \mathcal{A}$, $k, k' \in \mathcal{H}$, the estimators
$\hat{\tau}_{[a]}(k, k')$ and $\hat{\tau}(k, k')$ are unbiased for $\tau_{[a]}(k,k')$ and
$\tau(k,k')$ respectively, and are asymptotically normally distributed. In addition, the
standard Wald-type confidence interval for $\hat{\tau}_{[a]}(k,k')$ and
  $\hat{\tau}(k,k')$ are asymptotically conservative.
  
\end{theorem}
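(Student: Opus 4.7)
The plan is to leverage Theorem~\ref{main} to reduce the problem to a standard stratified multi-arm completely randomized design on the exposure scale, and then invoke off-the-shelf finite-population results. By Theorem~\ref{main}, the induced exposure distribution $\pr(H)$ is $\operatorname{SCRD}(\mathbf{n}_A)$: conditional on the stratum $A_i = a$, the sub-vector $(H_i)_{i:A_i=a}$ is a completely randomized assignment to $|\mathcal{H}|$ arms with fixed arm sizes $(n_{a,h})_{h\in\mathcal{H}}$, and these sub-vectors are independent across strata. Under Assumption~\ref{assumption}, $Y_i = Y_i(H_i)$, so the observed outcomes within stratum $a$ are exactly what one would observe in a standard multi-arm CRD with potential outcomes $\{Y_i(k)\}_{k\in\mathcal{H}}$.

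For unbiasedness, I would first show $E[\hat{Y}_{[a]}(k)] = N_{[a]}^{-1}\sum_{i:A_i=a}Y_i(k)$, which holds because under a stratum-$a$ CRD each unit with attribute $a$ is equally likely to receive exposure $k$. Linearity then gives $E[\hat{\tau}_{[a]}(k,k')] = \tau_{[a]}(k,k')$, and since $\tau(k,k') = \sum_a (N_{[a]}/N)\tau_{[a]}(k,k')$ with matching weights, $\hat{\tau}(k,k')$ is unbiased for $\tau(k,k')$.

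For asymptotic normality, I would invoke the finite-population CLT for multi-arm CRDs (for example the results cited from \citet{li2017general} and \citet{li2019randomization}): within each stratum, under the usual regularity conditions (stratum size tending to infinity, proportional arm sizes, bounded fourth moments of potential outcomes, and bounded maximum squared deviation relative to variance), $\hat{\tau}_{[a]}(k,k')$ is asymptotically normal with the standard Neyman-style variance involving $S^2_{[a]}(k)$, $S^2_{[a]}(k')$ and the unit-level heterogeneity term $S^2_{[a]}(k,k')$. The stratified estimator $\hat{\tau}(k,k')$ is then a deterministic linear combination of $\hat{\tau}_{[a]}(k,k')$ across strata, which are mutually independent by the second condition of $\operatorname{SCRD}$; asymptotic normality therefore extends to $\hat{\tau}(k,k')$ by independence and Slutsky.

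For asymptotic conservativeness of the Wald interval, I would use the standard plug-in variance estimator that replaces $S^2_{[a]}(k)$ and $S^2_{[a]}(k')$ by their sample analogues and drops the unidentifiable cross term $S^2_{[a]}(k,k')$. The classical Neyman argument shows this estimator converges in probability to a limit that upper-bounds the true asymptotic variance, and independence across strata preserves this upper bound for $\hat{\tau}(k,k')$. The main obstacle in formalizing this is not any single step but bookkeeping: stating the regularity conditions precisely for the multi-arm, multi-stratum setup, and verifying that the independence guaranteed by the second SCRD condition indeed allows the componentwise CLT and componentwise conservative-variance results to be assembled without cross-stratum covariance terms. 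Once this is set up, the argument is a direct application of the results already available in the randomization-based inference literature, so the proof becomes essentially a reduction argument following Theorem~\ref{main}.
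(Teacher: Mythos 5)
Your proposal is correct and follows essentially the same route as the paper: the paper gives no separate proof of this theorem, instead observing that Theorem~\ref{main} reduces the problem to a stratified completely randomized design on the exposure scale and then citing the standard finite-population results of \citet{li2017general} and Theorem~3 of \citet{li2019randomization} for unbiasedness, the CLT, and the conservativeness of the Neyman-style variance estimator. Your write-up simply makes explicit the bookkeeping (within-stratum CRD, independence across strata, dropped cross term $S^2_{[a]}(k,k')$) that the paper leaves to those references.
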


Stratified completely randomized designs also make it straightforward to incoporate
covariates in the analysis; see the Supplementary Material for details.

\subsection{Testing pairwise null hypotheses}

Building on recent literature on testing under interference \citep{Basse:2019b,aronow2012general,athey2018exact}, we construct a Fisher Randomization Test, conditioning on a focal set, defined as
\begin{equation*}
  \mathcal{U} = u(Z(L), W) = \{i \in \mathbb{I}: h_i(Z(L),W) \in \{(\gamma_1,w_1), (\gamma_2,w_2)\} \}.
\end{equation*}

Let the test statistic $T$ be the difference in means between the focal
units with exposure $(\gamma_1, w_1)$ and those with exposure $(\gamma_2,w_2)$.
The following proposition defines a valid test of $H_0^{(\gamma_1,w_1),(\gamma_2,w_2)}$. 

\begin{proposition}
\label{validity}
Consider observed $N-$vectors of exposure $H^{obs} \sim \pr(H)$ and outcome $Y^{obs} = Y(H^{obs})$, resulting in focal set $\mathcal{U}^{obs}$ and test statistic $T^{obs} = T(H^{obs}, Y^{obs}, \mathcal{U}^{obs})$. If $H' \sim \pr(H|\mathcal{U}^{obs})$ and $T' = T(H',Y^{obs}, \mathcal{U}^{obs})$, then the following quantity,
$$
\operatorname{pval}(H^{obs}) = \pr(T' \geq T^{obs}| \mathcal{U}^{obs})
$$
is a valid p-value conditionally and marginally for $H_0^{(\gamma_1,w_1),(\gamma_2,w_2)}$. That is, if $H_0^{(\gamma_1,w_1),(\gamma_2,w_2)}$ is true, then for any $\mathcal{U}^{obs}$ and $\alpha \in [0,1]$, we have $\pr\{\operatorname{pval}(H^{obs}) \leq \alpha | \mathcal{U}^{obs}\} \leq \alpha$.
\end{proposition}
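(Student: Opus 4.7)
The plan is to leverage the observation that, while the pairwise null $H_0^{(\gamma_1,w_1),(\gamma_2,w_2)}$ is not sharp over the full collection of potential outcomes, it becomes effectively sharp once attention is restricted to units in the focal set $\mathcal{U}^{obs}$. This is the conditional-randomization-test strategy of \citet{Basse:2019b}, transported from the treatment scale to our composite-experiment exposure scale. The argument then decomposes into an imputation step, a distributional-equivalence step, and a generic p-value step.

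First I would establish imputation: under $H_0^{(\gamma_1,w_1),(\gamma_2,w_2)}$, every focal unit $i \in \mathcal{U}^{obs}$ satisfies $Y_i^{obs} = Y_i(\gamma_1,w_1) = Y_i(\gamma_2,w_2)$. Because the focal set depends on the underlying assignment only through the exposure vector (by the definition of $u$ together with Assumption~\ref{assumption}), any alternative exposure $H'$ with $u(H') = \mathcal{U}^{obs}$ necessarily assigns each focal unit some value in $\{(\gamma_1,w_1),(\gamma_2,w_2)\}$. Hence the potential outcome under $H'$ equals $Y_i^{obs}$ for every focal unit, and $T(H', Y^{obs}, \mathcal{U}^{obs})$ is literally the test statistic one would have computed had $H'$ been the realized exposure vector, making the $Y^{obs}$ plug-in legitimate.

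Next I would establish distributional equivalence: conditional on $\{u(H^{obs}) = \mathcal{U}^{obs}\}$, by definition $H^{obs}$ has distribution $\pr(H \mid \mathcal{U}^{obs})$, which is precisely the reference distribution from which $H'$ is drawn. Combined with the imputation step, this implies that under the null $T^{obs}$ and $T'$ share the same conditional distribution given $(\mathcal{U}^{obs}, Y^{obs})$. The standard lemma on conservative p-values then yields that $\operatorname{pval}(H^{obs}) = \pr(T' \geq T^{obs} \mid \mathcal{U}^{obs})$ is stochastically dominated by $\operatorname{Unif}[0,1]$ given $\mathcal{U}^{obs}$, which is the conditional validity claim. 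Marginal validity follows by iterated expectation: $\pr\{\operatorname{pval}(H^{obs}) \leq \alpha\} = \mathbb{E}[\pr\{\operatorname{pval}(H^{obs}) \leq \alpha \mid \mathcal{U}^{obs}\}] \leq \alpha$.

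The main conceptual hurdle is verifying that the focal set $\mathcal{U}$ genuinely depends on $(L,W)$ only through $H$, so that the conditioning event $\{u(H^{obs}) = \mathcal{U}^{obs}\}$ is well-defined on the exposure scale and invariant under resampling $H'$ from $\pr(H \mid \mathcal{U}^{obs})$. Assumption~\ref{assumption} together with the specific form of the exposure mapping in \eqref{eq:exp-mapping} handles this cleanly, since $\mathcal{U}$ is a deterministic function of $(H_1,\ldots,H_N)$. Once this bookkeeping is in place, no further computation is required, as the remainder of the argument is standard.
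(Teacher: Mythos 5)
Your proof is correct and follows essentially the same route as the paper's: both hinge on showing that, once attention is restricted to the focal set, the pairwise null makes the test statistic imputable (every $H$ with positive conditional probability given $\mathcal{U}^{obs}$ assigns focal units one of the two null-equated exposures, so their outcomes equal $Y^{obs}$), after which validity is the standard conditional randomization test argument. The only difference is cosmetic: the paper verifies imputability on the assignment scale $C=(L,W)$ and then delegates the generic exchangeability/p-value step to Theorem 2.1 of \citet{basse2019randomization}, whereas you work directly on the exposure scale and spell that step out.
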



Although it always leads to valid p-values, the test in Proposition~\ref{validity} is
computationally intractable for most choices of designs $pr(L, W)$. The challenge,
as highlighted by \cite{basse2019randomization}, is the step that requires sampling from the
conditional distribution of $pr(H \mid \mathcal{U}^{obs})$: even in small samples, this cannot
be accomplished by rejection sampling. Our key result in this section is that
if the design is symmetric in the sense of Section~\ref{randomization}, then the test in
Proposition~\ref{validity} can be carried efficiently:

\begin{theorem}
\label{main_2}
Let $\pr(C)$ be generated from randomization procedure described in Definition \ref{procedure} and $\pr(H)$ the induced exposure distribution. Define a focal set $\mathcal{U} = u(Z,W) = \{i \in \mathbb{I}: h_i(Z,W) \in \mathcal{H}_u\}$ for some $pr(W,Z) > 0$ and set of exposures $\mathcal{H}_u \subset \mathcal{H}$. Let $U = (U_1,\ldots, U_{N})$, where $U_i = \mathds{1}(i \in \mathcal{U})$. Then the conditional distribution of exposure, $pr(H| \mathcal{U})$, is $SCRD(\mathbf{n}_{A\mathcal{U}})$.
\end{theorem}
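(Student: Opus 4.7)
The plan is to exploit the group-theoretic structure of the randomization procedure to reduce Theorem~\ref{main_2} to Theorem~\ref{main} applied to an augmented stratification. Under Definition~\ref{procedure}, $C$ is uniform on the orbit of $C_0$ under $S_A$. Because Assumption~\ref{assumption} makes the exposure a function of $(Z,W)$ alone, and because any $\pi \in S_A$ preserves $A$, a direct calculation gives the equivariance relation $H(\pi \cdot C) = \pi \cdot H(C)$. Consequently, for the focal set $\mathcal{U} = u(Z,W) = \{i : H_i \in \mathcal{H}_u\}$, we obtain $u(\pi \cdot C) = \pi(u(C))$. This equivariance is the single fact on which everything else rests, and verifying it cleanly is the first thing I would write down.

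Next, I would identify the conditional law $\pr(C \mid \mathcal{U} = \mathcal{U}^{obs})$. Let $S_{\mathcal{U}^{obs}} = \{\pi \in S : \pi(\mathcal{U}^{obs}) = \mathcal{U}^{obs}\}$ be the setwise stabilizer of $\mathcal{U}^{obs}$, and set $S_{A,\mathcal{U}^{obs}} = S_A \cap S_{\mathcal{U}^{obs}}$. Pick any $C^\star$ in the support of $\pr(C \mid \mathcal{U}^{obs})$. For any other $C^{\star\star}$ in the support, write $C^\star = \pi \cdot C_0$ and $C^{\star\star} = \pi' \cdot C_0$ and set $\sigma = \pi' \pi^{-1} \in S_A$; the equivariance forces $\sigma(\mathcal{U}^{obs}) = \mathcal{U}^{obs}$, so $\sigma \in S_{A,\mathcal{U}^{obs}}$ and $C^{\star\star} = \sigma \cdot C^\star$. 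Hence the support of $\pr(C \mid \mathcal{U}^{obs})$ is precisely the orbit of $C^\star$ under $S_{A,\mathcal{U}^{obs}}$. Since the unconditional distribution is uniform on the $S_A$-orbit of $C_0$, restricting to a subset of the orbit keeps the distribution uniform; an orbit-stabilizer bookkeeping check confirms that $\pr(C \mid \mathcal{U}^{obs})$ is the law of $\sigma \cdot C^\star$ with $\sigma \sim \mathrm{Unif}(S_{A,\mathcal{U}^{obs}})$.

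The final step is essentially tautological given the above: define the augmented attribute $\tilde A_i = (A_i, U_i)$, with $U_i = \iv(i \in \mathcal{U}^{obs})$. Then $S_{A,\mathcal{U}^{obs}}$ is exactly the stabilizer $S_{\tilde A}$ of $\tilde A$ in $S$. So, conditional on $\mathcal{U}^{obs}$, the assignment $C$ is generated by exactly the randomization procedure of Definition~\ref{procedure} with attribute vector $\tilde A$ and initialization $C^\star$. Applying Theorem~\ref{main} to this augmented problem yields that the induced exposure distribution is $\mathrm{SCRD}(\mathbf{n}_{\tilde A})$, which is precisely $\mathrm{SCRD}(\mathbf{n}_{A\mathcal{U}})$.

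The main obstacle I anticipate is step one: carefully verifying that the exposure mapping is $S_A$-equivariant. This requires tracking how $\pi$ acts on group labels $L$, on the resulting neighbor sets $Z_i$, and on the treatments $W$ simultaneously, and using crucially that $\pi \in S_A$ preserves the attribute component $A_j$ appearing inside $\Gamma_i$. Once this equivariance is established, the orbit-stabilizer argument and reduction to Theorem~\ref{main} are conceptually straightforward, though one must be careful that the conditional distribution is genuinely uniform on the sub-orbit (which follows because all stabilizers of elements in a single orbit have the same cardinality).
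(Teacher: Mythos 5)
Your proof is correct and follows essentially the same route as the paper's: both rest on the $S_A$-equivariance of the exposure mapping, identify the conditional support as a single orbit of the joint stabilizer $S_{AU}$ via the Orbit--Stabilizer Theorem, and then reduce to the unconditional stratified-CRD result with the augmented stratification $\tilde A_i = (A_i, U_i)$. The only (cosmetic) difference is that you condition in $C$-space and invoke Theorem~\ref{main} wholesale, whereas the paper conditions in $H$-space and packages the same argument through an explicit notion of $\Pi$-symmetry before invoking its SCRD lemma.
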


This theorem makes the test described in Proposition \ref{validity} computationally tractable by transforming
a difficult task --- sampling from an arbitrary conditional distribution --- into a simple
one --- sampling from a stratified completely randomized design.

\section{Optimal design heuristics}
\label{design}

Definition~\ref{procedure} requires the specification of an initial
pair $C_0 = (L_0, W_0)$. A straightforward consequence of
Theorem~\ref{main} is that the number of unit in a stratum $a$
receiving exposure $k$ is constant. Formally, let $H_0$ be the
exposure corresponding to $C_0$, and $H$ be any exposure vector that
may be induced by our procedure: we have
$n_{[a]k}(H_0) = n_{[a]k}(H)$, where
$n_{[a]k}(H') = |\{i \in \mathds{I}: H'_i = k, A_i = a\}|$.  If the
experimenter knows ex-ante that she is interested in estimating
$\tau(k, k')$, or testing the pairwise null $H_0^{k,k'}$, then a
useful heuristic for maximizing power would be to select $C_0$ such
that the associated exposure vector $H_0$ features many units with the
desired exposures $k$ and $k'$. Constructing such a $C_0$ manually is
possible in very small toy examples, but it becomes impractical as the
sample size increases even slightly. An alternative option would be to
perform a random search on the space of possible pairs $C = (L, W)$,
but it grows very fast as the number of clusters and their sizes
increases; making the process computationally challenging. Instead we
optimize our heuristic criterion directly.

Let $\mathcal{G} \subseteq (\mathcal{A} \times \mathcal{W})^m$ the set
of all possible attribute-intervention compositions for a group of
size $m$, so for any $G \in \mathcal{G}$,
$G = \{(a_1,w_1),\ldots,(a_m,w_m)\}$. For a group composition
$G \in \mathcal{G}$, target exposures $k, k' \in \mathcal{H}$, and
attribute $a \in \mathcal{A}$, let $m_k(G)$ and $c_a(G)$ be
respectively the number of units with exposure $k$ and the number of
units with attribute $a$, in group composition $G$. Finally, let
$n(G)$ be the number of groups with composition $G$.  Our heuristic
objective can formulated as the following integer linear program:
\begin{align*}
\operatorname{argmax}_{\{n_G\}_{G}} & \sum_{G \in \mathcal{G^*}} n(G) (m_{k}(G) + m_{k'}(G))\\
\operatorname{s.t.} & \sum_{G \in \mathcal{G^*}} n(G) c_a(G) \leq n_{[a]}, \forall a \in \mathcal{A}\\
& n(G)  \geq 0, n(G) \in \mathbb{Z}, \forall n(G).
\end{align*}
where $\mathcal{G}^\ast = \{G \in \mathcal{G}: m_k(G) + m_{k'}(G) > 0\}$. This
optimization problem can be solved efficiently numerically by relaxing the
integer constraint and rounding off the result. It does require enumerating the
set $\mathcal{G}^\ast$, but this is generally straightforward --- much more so
than enumerating the set of all possible assignment pairs. In particular,
$m_k(G), m_{k'}(G)$, and $c_a(G)$ can be computed for all
$G\in \mathcal{G}^\ast$ and all $a \in \mathcal{A}$, in constant time.

The objective criterion presented above seeks to maximize the number of units
receiving either exposure $k$ or exposure $k'$: this is a reasonable first order
criterion, but it has two drawbacks. First, it may lead to solutions with many
units exposed to $k$ or $k'$, but with a very unequal repartition: for instance,
we may have many units with exposure $k$, but none with exposure $k'$.  Smaller
imbalances may still have a large impact on the variance of stratified
estimators. Second, the number of units receiving each exposure may be balanced
overall but unbalanced within each stratum $a \in \mathcal{A}$, which may be
very problematic: indeed, we show in the Supplementary Material that in the
extreme case where all the units with exposure $k$ have attribute $a$ and all
the units with exposure $k'$ have exposure $a'$, our randomization test has no
power. Both issues can be addressed with minor modifications of the optimization
constraints presented above. We discuss the details and Supplementary Material, and
show that the resulting optimization problem is still an integer linear program.

\section{Simulation results}
We compare the power of our Procedure~\ref{randomization} for different design
strategies. We simulate a population of $N = 300$ units with binary attributes,
and consider a composite experiment that assigns these $N$ units to groups
of equal size $m$ for $m = 3, 4, 5, 6$, and then assigns a binary treatment
$W$ to a random subset of units. Using the exposure mapping of
Equation~\ref{eq:binary-exposure-mapping}, we focus on testing the null
hypothesis $H_0^{k, k'}$ where $k = (1, 1, 1, 1)$ and $k' = (2, 1, 1, 0)$. The
potential outcomes are generated as follows:
\begin{equation*}
  Y_i(k_0) = \begin{cases}
    M_i &\text{ if } k_0 \neq k'\\
    M_i + \tau &\text{ if } K_0 = k'
  \end{cases}
  \qquad \text{ where } \qquad
  M_i \sim \mathcal{N}(0, 1)
\end{equation*}
so that $H_0^{k,k'}$ holds for $\tau = 0$, and the magnitude of the violation
of the null is controlled by varying the parameter $\tau$ in the simulation.

\begin{figure}[!h]
\begin{minipage}[b]{0.5\textwidth}
\centering
\includegraphics[width=0.9\textwidth]{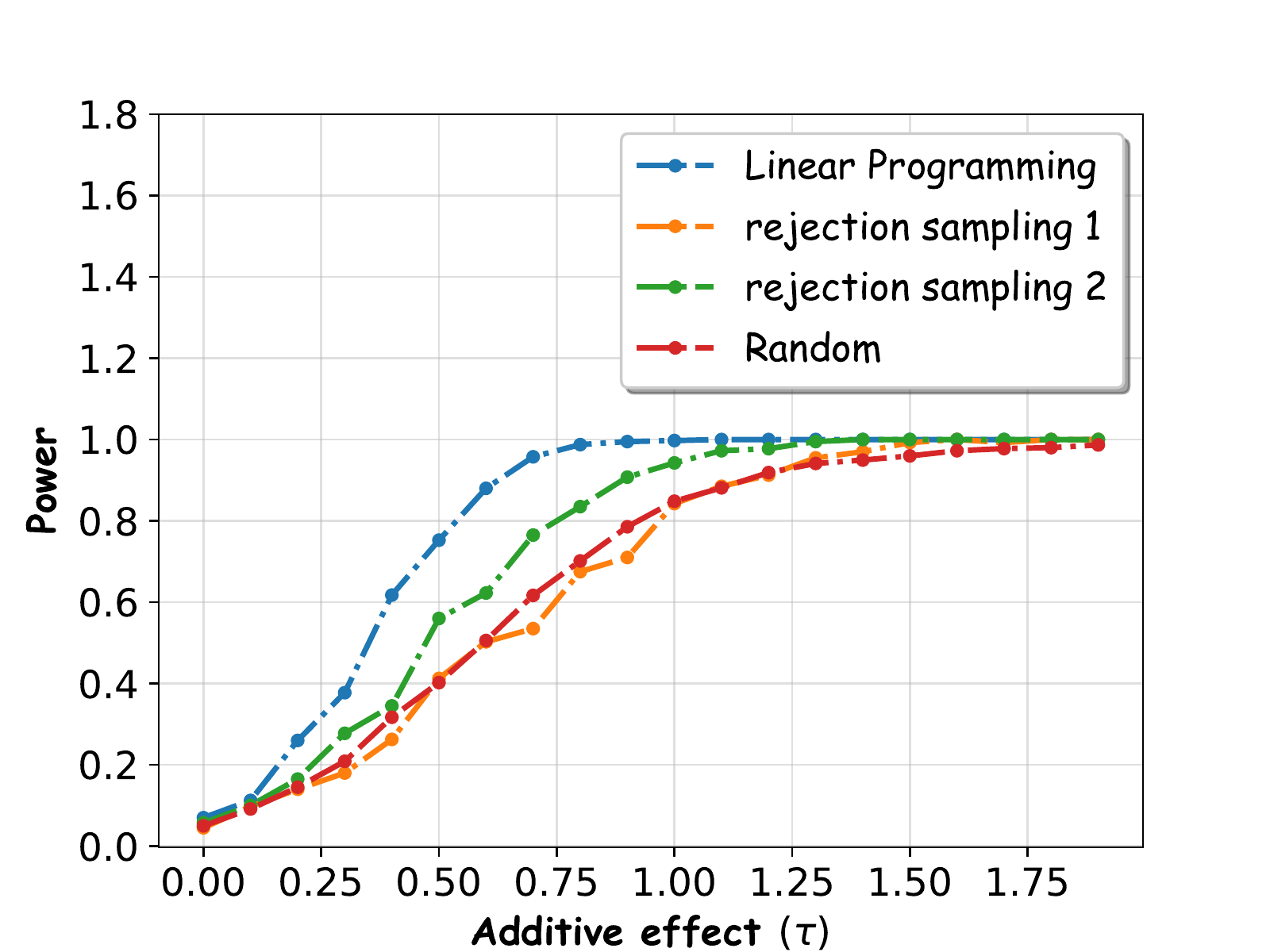}
\subcaption{$m=3$}
\label{m3}
\end{minipage}
\begin{minipage}[b]{0.5\textwidth}
\centering
\includegraphics[width=0.9\textwidth]{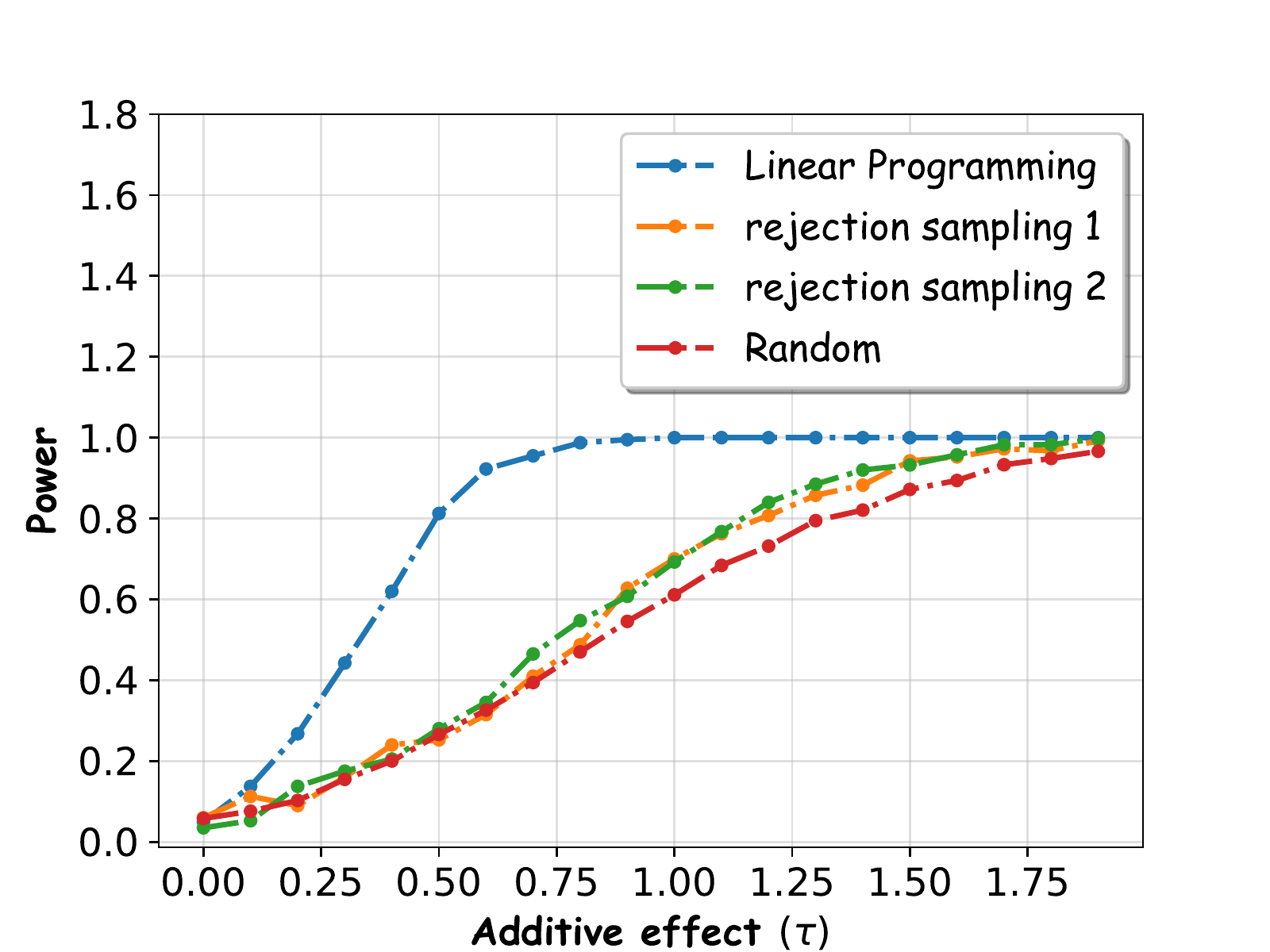}
\label{m4}
\subcaption{$m=4$}
\end{minipage}
\begin{minipage}[b]{0.5\textwidth}
\centering
\includegraphics[width=0.9\textwidth]{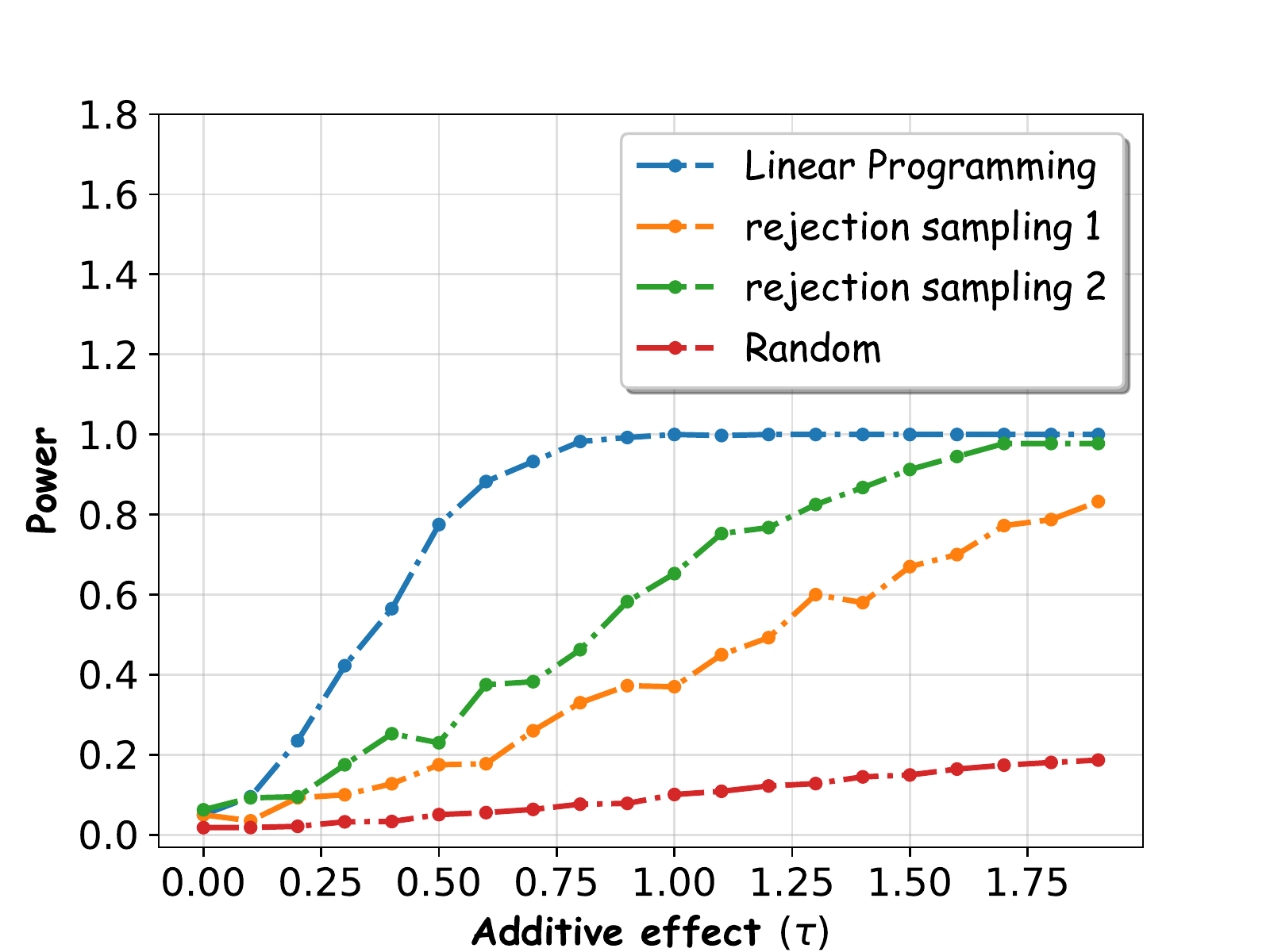}
\label{m5}
\subcaption{$m=5$}
\end{minipage}
\begin{minipage}[b]{0.5\textwidth}
\centering
\includegraphics[width=0.9\textwidth]{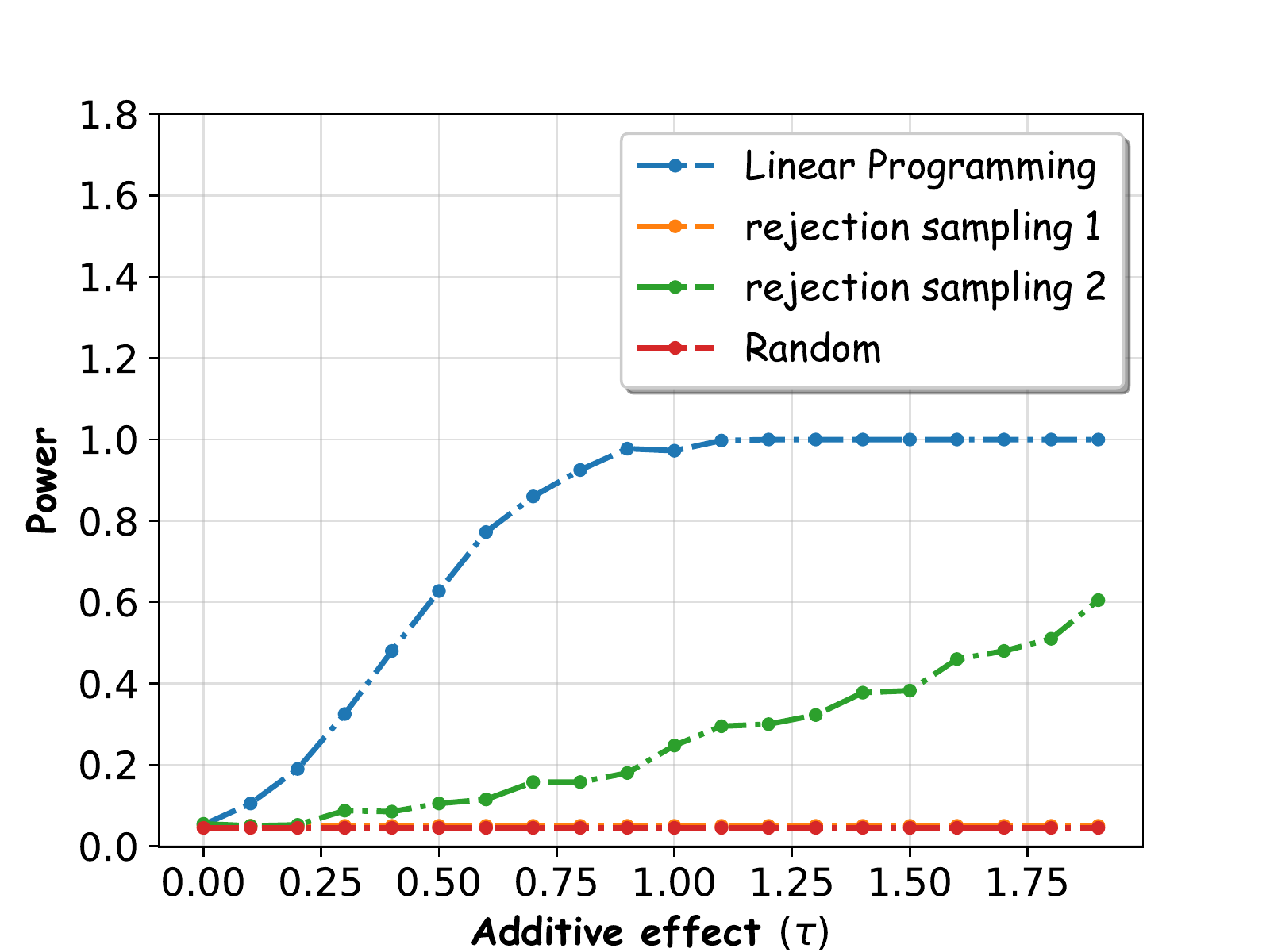}
\label{m6}
\subcaption{$m=6$}
\end{minipage}
\caption{Power against additive effect for different group sizes}
\label{fig:simul-figures}
\end{figure}

In all simulations, we use the randomization procedure described in
Definition~\ref{def:rand-procedure}, but we vary the choice of the initial
$C_0$ --- different choices lead to different designs. We compare the optimal
initialization strategy of Section~\ref{design} with three alternative
initialization strategies to assign $C$: 
\begin{enumerate}
\item Rejection sampling 1: The best initialization in $M  =10$ random permutations of the initial $C_0$
\item Rejection sampling 2:  The best initialization in $M  =1000$ random permutations of the initial $C_0$
\item Random initialization: A random permutation of the initial $C_0$
\end{enumerate}
More details
on the simulation setup can be found in the Supplementary Material. The results
of our simulations are plotted in Figure~\ref{fig:simul-figures}.

In our simulation, optimal design using linear programming leads to more
powerful tests than the other initializations for all additive effects and group
sizes we considered. The benefits of our linear programming strategy grow
starker as the size of the groups increases; indeed, for $m=6$, our optimal
design leads to tests that have a power of $1$ against the alternative $\tau=1$,
while the best alternative initialization strategy leads to tests of power less
than $0.3$. This is because as group size increases, the number of possible
exposures increases significantly and it is much more difficult for brute force
algorithm with a fixed number of iterations to find a near-optimal solution.



\clearpage
\singlespacing
\bibliographystyle{chicago}
\bibliography{ref}


\newpage
\appendix

\section{Proof of the main results}
\subsection{Elements of group theory}
Throughout this section, recall that $C = (C_1,\ldots, C_N) \in \mathbb{C}$, where $C_i = (L_i,W_i)$ is the pair of group assignment and additional intervention assignment. 

\begin{definition}[Group action on a set]
\label{group action}
Consider a permutation group $\Pi$ and a finite set of $N$-vector pairs, $\mathbb{C}$. A group action of $\Pi$ on $\mathbb{C}$ is a mapping $\phi: \Pi \times \mathbb{C} \to \mathbb{C}$ (usually we write $\pi \cdot C$ instead of $\phi(\pi, C)$) satisfying the following:
\begin{enumerate}
\item for all $C \in \mathbb{C}$, $e \cdot C = C$ where $e$ is the identity element of $\Pi$;
\item for all $\pi,\pi' \in \Pi$, and all $C \in \mathbb{C}$, $\pi' \cdot(\pi \cdot C) = (\pi' \pi) \cdot C$
\end{enumerate}
\end{definition}

It can be checked that for $\pi \in S$ and $C = (C_1,\ldots,C_{N}) \in \mathbb{C}$, the mapping $\pi \cdot C = (C_{\pi^{-1}(i)})_{i=1}^N = (L_{\pi^{-1}(i)}, W_{\pi^{-1}(i)})_{i=1}^N$ is a group action.

\begin{definition}[Orbits and stabilizers]
Let $\Pi$ be a permutation group and $\mathbb{C}$ a finite set of $N$-vectors. If $C \in \mathbb{C}$, the orbit of $C$ under $\Pi$ is defined as
$$
\Pi \cdot C \equiv \{\pi \cdot C: \pi \in \Pi\},
$$
and the stabilizer of $C$ in $\Pi$ is defined as 
$$
\Pi_C \equiv \{\pi \in \Pi: \pi \cdot C = C\}.
$$
\end{definition}

Recall the definition of a transitive group action in the main text.
\begin{definition}[Transitivity]
A subgroup $\Pi \subset S$ of the symmetric group $S$ acts transitively on $\mathbb{C}$ if $\mathbb{C} = \Pi \cdot C$ for any $C \in \mathbb{C}$. 
\end{definition} 

We will now state a version of the Orbit-Stabilizer Theorem that will is specific to our setup. 
\begin{theorem}[Orbit-Stabilizer]
Let $\Pi$ be a permutation group acting transitively on a finite set of $N$-vectors $\mathbb{C}$.
\begin{enumerate}
\item For all $C, C' \in \mathbb{C}$, $|\Pi_C| = |\Pi_{C'}| = D$ a constant. In words, it means that all stabilizers have the same size. 
\item We already know that for all $C \in \mathbb{C}$, $\Pi \cdot C = \mathbb{C}$. We also have:
$$
|\Pi \cdot C| = \frac{|\Pi|}{|\Pi_C|} = \frac{|\Pi|}{D}. 
$$
\end{enumerate}
\end{theorem}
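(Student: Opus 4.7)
The plan is to prove part 2 first via the standard coset/fiber argument, and then deduce part 1 as an immediate corollary using transitivity.

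For part 2, fix $C \in \mathbb{C}$ and define the map $\phi_C : \Pi \to \Pi \cdot C$ by $\phi_C(\pi) = \pi \cdot C$. By the definition of the orbit, this map is surjective. I would then show that $\phi_C(\pi) = \phi_C(\pi')$ if and only if $\pi^{-1}\pi' \in \Pi_C$. The forward direction uses the two group-action axioms from Definition 1 in the appendix: from $\pi \cdot C = \pi' \cdot C$, applying $\pi^{-1}$ on both sides and using associativity gives $(\pi^{-1}\pi') \cdot C = e \cdot C = C$, so $\pi^{-1}\pi' \in \Pi_C$; the reverse direction is similar. This identifies the fibers of $\phi_C$ with the left cosets of $\Pi_C$ in $\Pi$, each of cardinality $|\Pi_C|$. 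Lagrange's theorem then yields $|\Pi| = |\Pi_C|\cdot|\Pi \cdot C|$, which rearranges to the desired formula $|\Pi \cdot C| = |\Pi|/|\Pi_C|$.

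For part 1, since $\Pi$ acts transitively on $\mathbb{C}$, we have $\Pi \cdot C = \mathbb{C}$ for every $C \in \mathbb{C}$, so $|\Pi \cdot C| = |\mathbb{C}|$ does not depend on $C$. Combined with the formula from part 2, this forces $|\Pi_C| = |\Pi|/|\mathbb{C}|$, a constant $D$ independent of the choice of $C$. (Alternatively, one can verify directly that stabilizers along the same orbit are conjugate subgroups and hence equinumerous: if $C' = \sigma \cdot C$, then $\pi \in \Pi_{C'}$ iff $\sigma^{-1}\pi\sigma \in \Pi_C$, so $\Pi_{C'} = \sigma \Pi_C \sigma^{-1}$.)

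There is no substantive obstacle here, as this is a textbook result; the only point requiring care is the clean use of the action axioms of Definition 1 to pass between equations such as $\pi \cdot C = \pi' \cdot C$ and $\pi^{-1}\pi' \in \Pi_C$. Once the bijection between fibers and cosets is set up, both conclusions drop out from Lagrange's theorem together with the transitivity hypothesis.
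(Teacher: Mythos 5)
Your proof is correct: the fiber/coset argument for part 2 (fibers of $\pi \mapsto \pi \cdot C$ are left cosets of $\Pi_C$, hence $|\Pi| = |\Pi_C|\,|\Pi\cdot C|$) and the deduction of part 1 from transitivity (or from conjugacy of stabilizers along an orbit) are exactly the standard textbook argument, and every step is compatible with the action axioms in Definition~\ref{group action}. The paper itself states this Orbit--Stabilizer result without proof, invoking it as a classical fact, so there is nothing to compare against beyond noting that your write-up supplies precisely the missing standard justification.
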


\subsection{Proof of Theorem \ref{main}}
\vspace{5mm}\noindent{\bf Theorem 1}. {\it 
If $\pr(C)$ is generated from the randomization procedure in Definition \ref{procedure}, then the induced distribution of exposure $\pr(H)$ is $SCRD(\mathbf{n}_{A})$.}\vspace{5mm}

The proof for Theorem \ref{main} can be split into two parts. The first part is about showing equivariance of exposure mapping under permutation of latent assignments, and the second part is about establishing symmetry property.

\begin{lemma}
\label{equivariance}
Let $\Pi$ be a subgroup of $S_A$, the stabilizer of the attribute vector $A$ in $S$. For $C \in \mathbb{C} \subset \{1,\ldots,K\}^{N} \times \mathcal{W}^{N}$, define $h^*(C)=h(W,Z(L))$, where $h_i(W,Z(L))=\{W_i\} \cup \{(A_j,W_j), j \in Z_i\}\}$ is the exposure mapping of unit $i$ and domain $\mathbb{H} = \{h^*(C): C \in \mathbb{C}\}$. Then we have that $h^*: \mathbb{C} \to \mathbb{H}$ is equivariant with respect to $\Pi$.
\end{lemma}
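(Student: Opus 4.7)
The plan is to unpack what equivariance means in this setting and then verify it through a careful index-tracking calculation, using crucially that $\Pi$ stabilizes $A$. Equivariance of $h^*$ with respect to $\Pi$ means: for every $\pi \in \Pi$ and every $C \in \mathbb{C}$, we must have $h^*(\pi \cdot C) = \pi \cdot h^*(C)$, where the right-hand side uses the coordinate-permutation action on $\mathbb{H}$, namely $(\pi \cdot H)_i = H_{\pi^{-1}(i)}$. So the target identity is the coordinate-wise statement $h_i(W', Z(L')) = h_{\pi^{-1}(i)}(W, Z(L))$ where $(L', W') = \pi \cdot (L, W)$.

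The first step is to describe how $Z(L)$ transforms under $\pi$. Writing $L'_i = L_{\pi^{-1}(i)}$, I would compute
\[
Z'_i = \{j \neq i : L'_i = L'_j\} = \{j \neq i : L_{\pi^{-1}(i)} = L_{\pi^{-1}(j)}\} = \pi(Z_{\pi^{-1}(i)}),
\]
by reindexing with $k = \pi^{-1}(j)$. In other words, the neighbor set of unit $i$ under the permuted assignment is the image under $\pi$ of the neighbor set of unit $\pi^{-1}(i)$ under the original assignment.

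The second step is to evaluate the exposure at unit $i$ after permutation:
\[
h_i(W', Z(L')) = \{W'_i\} \cup \{(A_j, W'_j) : j \in Z'_i\}
= \{W_{\pi^{-1}(i)}\} \cup \{(A_{\pi(k)}, W_k) : k \in Z_{\pi^{-1}(i)}\},
\]
where I substituted $j = \pi(k)$ and used $W'_{\pi(k)} = W_k$. Now I invoke the assumption $\pi \in \Pi \subseteq S_A$, which means $\pi \cdot A = A$, equivalently $A_{\pi(k)} = A_k$ for every $k$. This replaces $A_{\pi(k)}$ by $A_k$ in the expression above, collapsing it to $\{W_{\pi^{-1}(i)}\} \cup \{(A_k, W_k) : k \in Z_{\pi^{-1}(i)}\} = h_{\pi^{-1}(i)}(W, Z(L))$, which is exactly $(\pi \cdot h^*(C))_i$.

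The main conceptual point, and the only nontrivial ingredient, is the use of $\pi \in S_A$: without the stabilizer assumption, the relabeled attributes $A_{\pi(k)}$ would not agree with $A_k$, and equivariance would fail. The rest is mechanical bookkeeping of the permutation action on coordinates and of the transformation of the neighbor sets $Z_i$. I do not anticipate any real obstacle beyond being careful to distinguish $\pi$ from $\pi^{-1}$ consistently; the only subtle line is the derivation $Z'_i = \pi(Z_{\pi^{-1}(i)})$, which has to be done before the exposure is evaluated.
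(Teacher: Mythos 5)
Your proof is correct and follows essentially the same route as the paper's: both verify the coordinate-wise identity $[h^*(\pi\cdot C)]_i=[h^*(C)]_{\pi^{-1}(i)}$ by reindexing the neighbor set and then invoking $A_{\pi(k)}=A_k$ from $\pi\in S_A$ at exactly the same point. Your version is marginally cleaner in that it isolates the transformation rule $Z'_i=\pi(Z_{\pi^{-1}(i)})$ as a separate step rather than carrying it through one long chain of set manipulations, but the substance is identical.
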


\begin{proof}
 We will show that $h^*(\pi \cdot C) = \pi \cdot  h^*(C)$ for all $C \in \mathbb{C}$ and all $\pi \in \Pi$. 

Consider a fixed $C \in \mathbb{C}$ and $\pi \in \Pi$. By definition, we have
$$
[h^*(C)]_i = \{W_i\} \cup \{(A_j,W_j): j \neq i, L_j = L_i\}. 
$$
Then we have for all $i \in \mathbb{I}$,
\begin{align*} 
[h^*(\pi \cdot C)]_i &= \{[\pi \cdot W]_i\} \cup \{([\pi \cdot A]_j,[\pi \cdot W]_j): j \neq i, [\pi \cdot L]_j = [\pi \cdot L]_i\} \\
&= \{W_{\pi^{-1}(i)}\} \cup \{(A_{\pi^{-1}(j)},W_{\pi^{-1}(j)}): j \neq i, L_{\pi^{-1}(j)}= L_{\pi^{-1}(i)}\} \\
&= \{W_{\pi^{-1}(i)}\} \cup \{(A_j,W_{\pi^{-1}(j)}): j \neq i, L_{\pi^{-1}(j)}= L_{\pi^{-1}(i)}\} \\
&= \{W_{\pi^{-1}(i)}\} \cup \{(A_{\pi(\pi^{-1}(j))},W_{\pi^{-1}(j)}): \pi(\pi^{-1}(j)) \neq i, L_{\pi^{-1}(j)}= L_{\pi^{-1}(i)}\}\\
&= \{W_{\pi^{-1}(i)}\} \cup \{(A_{\pi(j')},W_{j'}): j' \neq \pi^{-1}(i), L_{j'} = L_{\pi^{-1}(i)} \} \\
&= \{W_{\pi^{-1}(i)}\} \cup \{(A_{j'},W_{j'}): j' \neq \pi^{-1}(i), L_{j'} = L_{\pi^{-1}(i)} \} \\
&= [h^*(C)]_{\pi^{-1}(i)} \\
&= [\pi \cdot h^*(C)]_i
\end{align*}
\end{proof}

Lemma \ref{equivariance} shows that exposure mapping is equivariant with respect to simultaneous permutation of the group and external intervention treatment assignments. In other words, permuting the latent assignment vector $C$ is equivalent to permuting the exposure mappings. This allows symmetry properties to propagate from latent assignments to the induced exposure distribution. Specifically, we focus on designs for which the exposure has a Stratified Completely Randomized Design. Recall the notion of $\operatorname{SCRD}(\mathbf{n}_{A})$ in Definition \ref{SCRD}.
 
\vspace{5 mm}\noindent{\bf Definition 1}. {\it 
Without loss of generality, denote by $\mathcal{H}$ the set of possible exposures and $A$ an N-vector. Let $\mathbf{n}_A = (n_{a,h})_{a \in \mathcal{A}, h \in \mathcal{H}}$, such that $\sum_{h\in \mathcal{H}} n_{a,h} = N_{[a]}$, denote a vector of non-negative integers corresponding to number of units with each possible attribute and exposure combination. We say that a distribution of exposures $\pr(H)$ is a stratified completely randomized design denoted by $\operatorname{SCRD}(\mathbf{n}_{A})$ if the following two conditions are satisfied. 
\begin{enumerate}
\item After stratifying based on $A$, the exposure $H = (H_1,\ldots, H_N)$ is completely randomized. That is, (1) $\mathbb{P}(H_i = h) = \mathbb{P}(H_j = h)$ for all $h \in \mathcal{H}$ and $i,j \in \mathbb{I}$ such that $A_i = A_j$; (2) the number of units with exposure $h \in \mathcal{H}$ and stratum $a \in \mathcal{A}$ is $n_{a,h}$.
\item The exposure assignments across strata are independent. That is $\mathbb{P}(H_i = h_i | H_j = h_j) = \mathbb{P}(H_i = h_i)$ for all $h_i, h_j \in \mathcal{H}$ and $i,j \in \mathbb{I}$ such that $A_i \neq A_j$. 
\end{enumerate}}\vspace{5mm}

\begin{lemma}
\label{SCRD_proof}
Fix any $H_0 \in \mathbb{H} = \{h^*(C): C \in \mathbb{C}\}$ and generate $H = \pi \cdot H_0$ where $\pi \sim \operatorname{Unif}(S_A)$. Then the distribution of exposures $\pr(H)$ is $\operatorname{SCRD}(\mathbf{n}_{A})$.
\end{lemma}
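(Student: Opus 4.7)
The plan is to exploit the product structure of the stabilizer group $S_A$. A permutation $\pi$ lies in $S_A$ if and only if $A_{\pi^{-1}(i)} = A_i$ for every $i$; equivalently, $\pi$ maps each stratum $[a] = \{i : A_i = a\}$ to itself. Hence $S_A$ factorizes as the direct product $\prod_{a \in \mathcal{A}} \mathrm{Sym}([a])$, and a uniform draw $\pi \sim \Unif(S_A)$ is equivalent to drawing independent uniformly random permutations $\pi_a$ on each index set $[a]$. This single observation drives the whole argument; everything else is bookkeeping.

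Next, since $H = \pi \cdot H_0$ means $H_i = (H_0)_{\pi^{-1}(i)}$ and $\pi^{-1}(i)$ lies in the same stratum as $i$, the restricted vector $(H_i)_{i \in [a]}$ is the image of $((H_0)_i)_{i \in [a]}$ under the uniformly random permutation $\pi_a$. From this I would deduce Condition 1 of Definition \ref{SCRD}: the multiset of exposures within stratum $a$ is preserved, so the number of units in $[a]$ with exposure $h$ equals $|\{i \in [a]: (H_0)_i = h\}|$, which by construction is $n_{a,h}$; and uniform exchangeability of $\pi_a$ yields $\pr(H_i = h) = \pr(H_j = h)$ whenever $A_i = A_j$.

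For Condition 2, since the $\pi_a$ are mutually independent across strata and $(H_i)_{i \in [a]}$ is a deterministic function of $\pi_a$ alone, the restricted exposure vectors $(H_i)_{i \in [a]}$ and $(H_j)_{j \in [a']}$ are independent for $a \neq a'$. In particular $\pr(H_i = h_i \mid H_j = h_j) = \pr(H_i = h_i)$ for any $i, j$ belonging to distinct strata, giving the cross-stratum independence. The only step that requires a moment of care is verifying that $\pi^{-1}(i)$ remains in $[a]$ whenever $i \in [a]$, which is immediate from $\pi \in S_A$; beyond that, I do not anticipate any substantial obstacle, since the result ultimately reduces to the elementary fact that a uniform random permutation of a fixed vector induces a completely randomized assignment on its entries.
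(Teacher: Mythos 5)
Your proof is correct, and it reaches the conclusion by a genuinely more direct route than the paper. The paper never states the factorization $S_A \cong \prod_{a\in\mathcal{A}} \mathrm{Sym}(I_a)$ explicitly; instead it derives its two consequences separately and by heavier means. For Condition 1 it introduces the restriction $\tilde{\pi}^{(k)}$ of $\pi$ to the stratum $I_k$, builds an auxiliary labelling vector $L^k$, and invokes the Orbit--Stabilizer Theorem to compute $\pr(\tilde{\pi}^{(k)} = \pi^*) = |S_{AL^k}|/|S_A| = 1/|S_A\cdot L^k|$, then bounds $|S_A\cdot L^k|$ from both sides to conclude the restriction is uniform on $\mathrm{Sym}(I_k)$. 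For Condition 2 it runs a separate Bayes-rule calculation showing $\pr(\pi(i)\in D \mid \pi(j)\in E) = |D|/|I_{A_i}|$. Your single observation --- that $\pi\in S_A$ iff $\pi$ preserves each stratum, so a uniform draw from $S_A$ is a tuple of independent uniform draws on the strata --- delivers both conditions at once: uniformity of each $\pi_a$ gives the within-stratum CRD and the fixed counts $n_{a,h}$, and independence of the $\pi_a$ gives cross-stratum independence because $(H_i)_{i\in I_a}$ is a function of $\pi_a$ alone. What the paper's route buys is consistency with the orbit--stabilizer machinery it needs anyway for Proposition \ref{link} and Theorem \ref{main_2}, where the relevant subgroup $S_{AU}$ no longer has such a clean product description over a single partition; what your route buys is brevity and transparency for this particular lemma. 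The one point worth making explicit in a final write-up is the bijection $S_A \leftrightarrow \prod_a \mathrm{Sym}(I_a)$ (unique decomposition of $\pi$ into its stratum restrictions, and conversely), since that is what licenses the claim that uniform on $S_A$ equals a product of independent uniforms; once stated, it is immediate from $|S_A| = \prod_a |I_a|!$.
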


\begin{proof}
We first note that if we permute $H$ by $\pi \in \operatorname{Unif}(S)$, then $\pr(H)$ is completely randomized (CRD). This is because with a random permutation, $\mathbb{P}(H_i = h) = \mathbb{P}(H_j = h)$ for all $h \in \mathcal{H}$. We then proceed by proving the two conditions in the definition for $\operatorname{SCRD}(\mathbf{n}_{A})$ separately. 
\begin{enumerate}
\item We will show that $\pr(H)$ satisfies completely randomized design (CRD) within each stratum defined by attribute vector $A$, i.e. $\mathbb{P}(H_i = h) = \mathbb{P}(H_j = h)$ for all $h \in \mathcal{H}$ and $i,j \in \mathbb{I}$ such that $A_i = A_j$. 

\noindent For each stratum $k$ as defined from $A$, let 
$$
I_k \equiv \{i \in \mathbb{I}: A_i = k\}.
$$ 
For $\pi \in S_A$, let $\tilde{\pi}^{(k)}: I_k \mapsto \mathbb{I}$ be the restriction of $\pi$ to $I_k$ such that $\tilde{\pi}^{(k)}(i) = \pi(i)$. Since $\pi \in S_A$, $\pi(i) \in I_k$, $\forall i \in I_k$. Therefore $\operatorname{Img}(\tilde{\pi}^{(k)}) \subset I_k$. But since $\pi$ is a permutation, $\tilde{\pi}^{(k)}$ is a bijection. Therefore $\operatorname{Img}(\tilde{\pi}^{(k)}) = I_k$. This shows that $\tilde{\pi}^{(k)} \in S^k$ where $S^k$ is the symmetric group on $I_k$. 

\noindent We then characterize the induced distribution of $\tilde{\pi}^{(k)}$ on $S^k$, where we sample $\pi \sim \operatorname{Unif}(S_A)$. Define the following $N$-vector $L^k = (L_1^k,\ldots, L_N^k)$ where 
$$
L_i^k = 
\begin{cases}
0, &\text{ if } i \notin I_k\\
i, &\text{ if } i \in I_k.
\end{cases}
$$

\noindent For any $\pi^* \in S^k$, we have 
\begin{align*}
\pr(\pi^*) &= \sum_{\pi \in S_A} 1\{\tilde{\pi}^{(k)} = \pi^*\}\frac{1}{|S_A|} \\
&= \frac{|S_{AL^k}|}{|S_A|} \\
&= \frac{1}{|S_A \cdot L^k|},
\end{align*}
where the last line is due to the Orbit-Stabilizer Theorem. We will further show that $|S_A \cdot L^k| = |S^k|$. 

\noindent For any $\pi \in S_A$, $(\pi \cdot L^k)_i \notin I_k$, $\forall i \notin I_k$. By the definition of $L^k$, we know that $(\pi \cdot L^k)_i = L_i^k = 0$, $\forall i \notin I_k$. Therefore 
$$
|S_A \cdot L^k| \leq |I_k|! = |S^k|.
$$

\noindent For the opposite inequality, consider any permutations $\pi \in S_k$ acting on $L^k$ restricted to $I_k$. Define the extended permutation $\tilde{\pi}$ on $\mathbb{I}$ by 
$$
\tilde{\pi}(i) = 
\begin{cases}
i, & i \notin I_k\\
\pi(i) & i \in I_k,
\end{cases}
$$
and denote the set of all such $\tilde{\pi}$ as $\tilde{S}^k$. Since $L_i^k \neq L_j^k$, $\forall i \neq j$ and $i,j \in I_k$, $|\tilde{S}^k \cdot L^k| = |S^k|$. Since by construction, $\tilde{S}^k \subset S_A$, we have that 
$$
|S_A \cdot L^k| \geq |\tilde{S}^k \cdot L^k| = |S^k|.
$$
Combining the two inequalities together, we have 
$$
\pr(\pi^*) =\frac{1}{|S_A \cdot L^k|} = \frac{1}{|S^k|}.
$$
This implies that the induced restricted permutations $\tilde{\pi}^{(k)} \sim \operatorname{Unif}(S^k)$ for all $k$. In other words, $\pr(H)$ satisfies CRD within each stratum $k$, and hence $\mathbb{P}(H_i = h) = \mathbb{P}(H_j = h)$ for all $h \in \mathcal{H}$ and $i,j \in \mathbb{I}$ such that $A_i = A_j$.
\item We will show that exposure assignments are independent across strata. First notice that for all $h_i, h_j \in \mathcal{H}$ and $i,j \in \mathbb{I}$ such that $A_i \neq A_j$,
\begin{align*}
\mathbb{P}(H_i = h_i \vert H_j = h_j) &=\mathbb{P}((\pi\cdot H_0)_i = h_i \vert (\pi \cdot H_0)_j = h_j)\\
&= \mathbb{P}(\pi(i) \in D | \pi(j) \in E),
\end{align*}
for some disjoint sets $D,E \subset \mathbb{I}$ such that $D \subset I_{A_i}$ and $E \subset I_{A_j}$. By Baye's rule we have,
\begin{align*}
\mathbb{P}(\pi(i) \in D \vert \pi(j) \in E) &= \frac{\mathbb{P}(\pi(i) \in D, \pi(j) \in E)}{\mathbb{P}(\pi(j) \in E)} \\
&=  \frac{\mathbb{P}(\pi(i) \in D, \pi(j) \in E)}{\sum_{i^* \in I_{A_i}} \mathbb{P}(\pi(i)= i^*, \pi(j) \in E)}\\
&= \frac{|D|}{|I_{A_i}|},
\end{align*}
where last equality is because $\mathbb{P}(\pi(i)= i^*, \pi(j) \in E)$ is the same for all $i^* \in I_{A_i}$. 
Finally we have 
$$
\mathbb{P}(H_i = h_i \vert H_j = h_j) = \frac{|D|}{|I_{A_i}|} = \mathbb{P}(\pi(i) \in D) = \mathbb{P}(H_i = h_i), 
$$
where the second equality is due to CRD within $I_{A_i}$ in part $(1)$.
\end{enumerate}
\end{proof}
Combining the above two Lemmas together proves Theorem \ref{main} that $\pr(H)$ is $\operatorname{SCRD}(\textbf{n}_A)$.

\subsection{Proof of Proposition \ref{validity}}

\vspace{5 mm}\noindent{\bf Proposition 1}. {\it 
Consider observed $N-$vectors of exposure $H^{obs} \sim \pr(H)$ and outcome $Y^{obs} = Y(H^{obs})$, resulting in focal set $\mathcal{U}^{obs}$ and test statistic $T^{obs} = T(H^{obs}, Y^{obs}, \mathcal{U}^{obs})$. If $H' \sim \pr(H|\mathcal{U}^{obs})$ and $T' = T(H',Y^{obs}, \mathcal{U}^{obs})$, then the following quantity,
$$
\operatorname{pval}(H^{obs}) = \pr(T' \geq T^{obs}| \mathcal{U}^{obs})
$$
is a valid p-value conditionally and marginally for $H_0^{(\gamma_1,w_1),(\gamma_2,w_2)}$. That is, if $H_0^{(\gamma_1,w_1),(\gamma_2,w_2)}$ is true, then for any $\mathcal{U}^{obs}$ and $\alpha \in [0,1]$, we have $\pr\{\operatorname{pval}(H^{obs}) \leq \alpha | \mathcal{U}^{obs}\} \leq \alpha$.
}\vspace{5mm}

\begin{proof}
Recall that $u(Z(L), W) = \{i \in \mathbb{I}: (h_i(Z(L),W),W_i) \in  \{(\gamma_1,w_1), (\gamma_2,w_2)\} \}$. Define $m(\mathcal{U}|C) = \mathds{1}\{u(Z(L),W) = \mathcal{U}\}$. Then,
\begin{equation*}
m(\mathcal{U}|C) >0 \Rightarrow u(Z(L),W) = \mathcal{U} \Rightarrow h_i(Z(L), W) \in \{(\gamma_1,w_1), (\gamma_2,w_2)\}, \forall i \in \mathcal{U}.
\end{equation*} 
Therefore $\pr( C | \mathcal{U}) > 0$ implies that $\pr(\mathcal{U}|C) >0$, and hence \\
$h_i(Z(L), W) \in \{(\gamma_1,w_1), (\gamma_2,w_2)\}, \forall i \in \mathcal{U}$. For all $C,C'$ such that $\pr(C| \mathcal{U}) >0 $ and $\pr(C'|\mathcal{U}) > 0$, we must have $h_i(Z(L), W), h_i(Z(L'), W') \in \{(\gamma_1,w_1), (\gamma_2,w_2)\}, \forall i \in \mathcal{U}$. Therefore under the null hypothesis $H_0^{(\gamma_1,w_1),(\gamma_2,w_2)}$, $Y_i(Z(L),W) = Y_i(Z(L'),W') = Y_i(\gamma_1,w_1) = Y_i (\gamma_2,w_2)$. This means that under $H_0^{(\gamma_1,w_1),(\gamma_2,w_2)}$, the test statistic $T$ is imputable. The result then follows from Theorem 2.1 of \cite{basse2019randomization}.
\end{proof}

\subsection{Proof of Theorem \ref{main_2}}

\vspace{5mm}\noindent{\bf Theorem 3}. {\it 
Let $\pr(C)$ be generated from randomization procedure described in Definition \ref{procedure} and $\pr(H)$ the induced exposure distribution. Define a focal set $\mathcal{U} = u(Z,W) = \{i \in \mathbb{I}: h_i(Z,W) \in \mathcal{H}_u\}$ for some $pr(W,Z) > 0$ and set of exposures $\mathcal{H}_u \subset \mathcal{H}$. Let $U = (U_1,\ldots, U_{N})$, where $U_i = \mathds{1}(i \in \mathcal{U})$. Then the conditional distribution of exposure, $pr(H| \mathcal{U})$, is $SCRD(\mathbf{n}_{A\mathcal{U}})$.}\vspace{5mm}

In order to prove Theorem \ref{main_2}, we need to introduce concepts of group symmetry and then establish the connection between group symmetry and $SCRD(\mathbf{n}_{A\mathcal{U}})$.
\begin{definition}[$\Pi$-symmetry]
Let $\Pi \subset S$ be a subgroup of the symmetric group $S$. A distribution, $\pr(H)$ with domain $\mathbb{H}$ is called $\Pi$-symmetric if $\pr(H) = \operatorname{Unif}(\mathbb{H})$ and $\Pi$ acts transitively on $\mathbb{H}$.
\end{definition}

The following Proposition establishes connections between $\Pi$-symmetry and sampling procedure.
\begin{proposition}
\label{link}
Let $\Pi \subset S$ be a subgroup of the symmetric group $S$ and $\mathbb{H} = \{h^*(C): C \in \mathbb{C}\}$. Take any $H_0 \in \mathbb{H}$ and define $\mathbb{H}_0 = \Pi \cdot H_0$.
\begin{enumerate}
\item If we sample $H = \pi \cdot H_0$, where $\pi \sim \operatorname{Unif}(\Pi)$, then the distribution of $H$ is $\Pi$-symmetric on $\mathbb{H}_0$.
\item If a distribution of $H$ is $\Pi$-symmetric on its domain $\mathbb{H}_0$, then it can be generated by sampling $H = \pi \cdot H_0$, where $\pi \sim \operatorname{Unif}(\Pi)$.
\end{enumerate}
\end{proposition}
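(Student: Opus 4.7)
The plan is to reduce both parts to two standard facts about group actions: (i) the orbit of any element is exactly the set on which the group acts transitively once we restrict to that orbit, and (ii) for a transitive action, the fibers of the map $\pi \mapsto \pi \cdot H_0$ are all cosets of the stabilizer $\Pi_{H_0}$, hence all of the same cardinality. These are precisely the two pieces supplied by the Orbit-Stabilizer theorem stated earlier in the appendix, so no new machinery is required.

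For part (1), I would first check transitivity of the action on $\mathbb{H}_0$. By construction $\mathbb{H}_0 = \Pi \cdot H_0$, so for any two elements $\pi_1 \cdot H_0, \pi_2 \cdot H_0 \in \mathbb{H}_0$ the element $\pi_2 \pi_1^{-1} \in \Pi$ sends the first to the second; transitivity is immediate from the definition of an orbit. Next I would show uniformity of the induced distribution on $\mathbb{H}_0$. Fix an arbitrary $H' \in \mathbb{H}_0$ and pick some $\pi^\ast \in \Pi$ with $\pi^\ast \cdot H_0 = H'$ (this exists by transitivity). The preimage $\{\pi \in \Pi : \pi \cdot H_0 = H'\}$ equals the left coset $\pi^\ast \Pi_{H_0}$, which has cardinality $|\Pi_{H_0}|$. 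Therefore
\begin{equation*}
\pr(H = H') \;=\; \frac{|\pi^\ast \Pi_{H_0}|}{|\Pi|} \;=\; \frac{|\Pi_{H_0}|}{|\Pi|},
\end{equation*}
which by Orbit-Stabilizer equals $1/|\mathbb{H}_0|$ and, crucially, does not depend on the choice of $H' \in \mathbb{H}_0$. Combined with transitivity, this gives the $\Pi$-symmetry of $\pr(H)$.

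For part (2), suppose $\pr(H)$ is $\Pi$-symmetric on its domain $\mathbb{H}_0$. By definition this means $\pr(H)$ is uniform on $\mathbb{H}_0$ and $\Pi$ acts transitively on $\mathbb{H}_0$. Transitivity together with the fact that $H_0 \in \mathbb{H}_0$ forces $\mathbb{H}_0 = \Pi \cdot H_0$. Now sample $\pi \sim \Unif(\Pi)$ and form $\tilde H = \pi \cdot H_0$; by part (1) its distribution is uniform on $\Pi \cdot H_0 = \mathbb{H}_0$. Since two probability distributions on the same finite set that are both uniform must coincide, $\tilde H$ has the same distribution as $H$, which is the desired construction.

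The only subtlety, and the single step worth pausing at, is the coset calculation in part (1): one must remember that the set of group elements mapping $H_0$ to a given image is a left coset of the stabilizer rather than the stabilizer itself, and it is this coset structure together with Lagrange/Orbit-Stabilizer that produces constant fiber sizes and hence uniformity. Once that observation is in hand, both directions are essentially bookkeeping.
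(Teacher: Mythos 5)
Your proof is correct and follows essentially the same route as the paper's: both establish transitivity directly from the orbit structure and then obtain uniformity by identifying the fiber of $\pi \mapsto \pi \cdot H_0$ as a coset of a stabilizer and invoking the Orbit-Stabilizer theorem (you use the left coset $\pi^\ast \Pi_{H_0}$ of the stabilizer of the base point where the paper uses the right coset $\Pi_H \pi_0$ of the stabilizer of the image, an immaterial difference since both have the same cardinality). Your explicit deduction of part (2) from part (1) via uniqueness of the uniform distribution on a common finite support is slightly more careful than the paper's remark that the two proofs are identical, but the substance is the same.
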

\begin{proof}
The proof for part (1) and part (2) are identical. 
The definition of $\Pi$-symmetry involves two parts, namely transitivity and uniform distribution on the support. 
We first show that $\Pi$ acts transitively on the set $\mathbb{H}_0$, that is for all $H \in \mathbb{H}_0$, $\Pi \cdot H = \mathbb{H}_0$. 

By construction, for all $H \in \mathbb{H}_0$, there exists $\pi_0 \in \Pi$ such that $H = \pi_0 \cdot H_0$. Therefore transitivity condition of $\Pi \cdot H = \mathbb{H}_0$ can also be written as 
$$
(\Pi \pi_0) \cdot H_0 = \Pi \cdot H_0.
$$
 To prove transitivity, it then suffices to show that $\Pi \pi_0 = \Pi$. 

Since for all $\pi \in \Pi \pi_0$, there exists $\pi' \in \Pi$ such that $\pi = \pi' \pi_0 \in \Pi$, we have $\Pi \pi_0 \subset \Pi$. For the reverse direction, consider $\pi \in \Pi$, we can expand $\pi = \pi \pi_0^{-1} \pi_0 \in \Pi \pi_0$ since $\pi \pi_0^{-1} \in \Pi$. Therefore $\Pi\pi_0 = \Pi$ and hence transitivity holds.

Before moving on to the second part, we first clarify some notations. Define $\pr_\Pi(\pi)  = \operatorname{Unif}(\Pi)$ and $\pr_{\Pi,H_0}(H)$ the distribution of $H$ generated by the sampling procedure: that is, the distribution of $H$ obtained by first sampling $\pi$ from $\pr_\Pi(\pi)$ and then applying $\pi \cdot H_0$. It remains to prove that $\pr_{\Pi,H_0}(H) = \operatorname{Unif}(\mathbb{H}_0)$.

Again we have for any $H \in \mathbb{H}_0$, there exists $\pi_0 \in \Pi$ such that $H = \pi_0 \cdot H_0$. This means that $H_0  = \pi_0^{-1}\cdot H$ for some $\pi_0^{-1} \in \Pi$. Therefore 
\begin{align*}
\pr_{\Pi,H_0}(H) & = \sum_{\pi \in \Pi} \mathds{1}(\pi \cdot H_0 = H) \pr_\Pi(\pi)\\
&= \sum_{\pi \in \Pi} \mathds{1}(\pi \cdot (\pi_0^{-1} \cdot H) = H)\pr_\Pi(\pi) \\
&= \sum_{\pi \in \Pi} \mathds{1}((\pi \pi_0^{-1}) \cdot H = H)\pr_\Pi(\pi) \\
&= \sum_{\pi \in \Pi} \mathds{1}(\pi \pi_0^{-1} \in \Pi_H) \pr_\Pi(\pi) \\
&= \sum_{\pi \in \Pi} \mathds{1}(\pi \in \Pi_H \pi_0)\pr_\Pi(\pi) \\
&= \pr_\Pi(\Pi_H \pi_0),
\end{align*}
where $\Pi_H$ is the stabilizer of $H$ in $\Pi$. Since $\pr_\Pi(\pi) = \operatorname{Unif}(\Pi)$ and $\Pi_H \pi_0 \subset \Pi$, we have 
\begin{equation}
\label{symmetry_1}
\pr_{\Pi, H_0}(H) = \pr_\Pi(\Pi_H \pi_0) = \frac{|\Pi_H \pi_0|}{|\Pi|}.
\end{equation}
We quickly verify that $|\Pi_H \pi_0 | = |\Pi_H|$.
Clearly, $|\Pi_H\pi_0| \leq |\Pi_H|$ and we only need to verify the other direction. Suppose that there exist $\pi_1, \pi_2 \in \Pi_H$ such that $\pi_1 \neq \pi_2$ but $\pi_1\pi_0 = \pi_2 \pi_0$. Then this would imply
\begin{equation*}
\pi_1\pi_0 \pi_0^{-1} = \pi_2 \Longrightarrow \pi_1 = \pi_2,
\end{equation*}
which is a contradiction. Since $\pi_1 \neq \pi_2$ implies $\pi_1 \pi_0 \neq \pi_2 \pi_0$, we know that $|\Pi_H\pi_0| \geq |\Pi_H|$. Therefore $|\Pi_H\pi_0| = |\Pi_H|$. Applying this to Equation \eqref{symmetry_1}, we have 
\begin{equation*}
\pr_{\Pi, H_0}(H) = \frac{|\Pi_H|}{|\Pi|}.
\end{equation*}
By the Orbit-Stabilizer Theorem, $|\Pi \cdot H| = |\Pi| / |\Pi_H|$. Therefore 
\begin{equation*}
\pr_{\Pi, H_0}(H) = |\Pi \cdot H|^{-1} = |\mathbb{H}_0|^{-1} = \operatorname{Unif}(\mathbb{H}_0),
\end{equation*}
where the second equality is due to transitivity that we proved earlier. Therefore $\pr_{\Pi, H_0}(H)$ is $\Pi$-symmetric on $\mathbb{H}_0$.
\end{proof}

We now proceed to prove Theorem \ref{main_2} in two steps. The first step tries to characterize symmetry property of $\pr(H|\mathcal{U})$ and the second step relates symmetry property to $SCRD(\mathbf{n}_{A\mathcal{U}})$.
\begin{proposition}
Let $\pr(C)$ be generated from randomization procedure in Definition \ref{procedure} and $\pr(H)$ the induced exposure distribution. Define a focal set $\mathcal{U} = u(Z,W) = \{i \in \mathbb{I}: h_i(Z,W) \in \mathcal{H}_u\}$ for some $pr(W,Z) > 0$ and set of exposures $\mathcal{H}_u \subset \mathcal{H}$. Let $U = (U_1,\ldots, U_{N})$, where $U_i = \mathds{1}(i \in \mathcal{U})$. Then the conditional distribution of exposure, $pr(H| \mathcal{U})$, is $S_{AU}-$symmetric, where $S_{AU}$ is the stabilizer of both $A$ and $U$ in $S$.
\end{proposition}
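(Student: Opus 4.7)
The plan is to view this proposition as a direct consequence of the symmetry-and-sampling correspondence already established in Proposition \ref{link}, together with one key observation: the focal-set indicator $U$ is itself an equivariant function of the exposure vector $H$. Theorem \ref{main} and Proposition \ref{link}(1) already give that $\pr(H)$ is $S_A$-symmetric on its orbit $\mathbb{H}_0 = S_A \cdot H_0$. What remains is to show that restricting to the event $\{U(H) = U\}$ preserves uniformity on the restricted support, and that the group acting transitively on that restricted support shrinks from $S_A$ to $S_{AU}$. The final result will then follow from Proposition \ref{link}(2) applied to the subgroup $S_{AU}$.

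First I would spell out that $U$ is a deterministic, equivariant function of $H$. Since $\mathcal{U} = \{i : h_i(Z,W) \in \mathcal{H}_u\}$ and $H_i = h_i(Z(L),W)$, we have $U_i = \iv(H_i \in \mathcal{H}_u)$. Hence for any $\pi \in S$,
\begin{equation*}
[U(\pi \cdot H)]_i = \iv((\pi \cdot H)_i \in \mathcal{H}_u) = \iv(H_{\pi^{-1}(i)} \in \mathcal{H}_u) = [U(H)]_{\pi^{-1}(i)} = [\pi \cdot U(H)]_i,
\end{equation*}
so $U(\pi \cdot H) = \pi \cdot U(H)$.

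Next I would use this equivariance to pin down the conditional support and its symmetries. Write $\mathbb{H}_0^U = \{H \in \mathbb{H}_0 : U(H) = U\}$ for the realized value $U$ of the indicator vector. Because $\pr(H)$ is uniform on $\mathbb{H}_0$, conditioning on the event $\{U(H) = U\}$ gives a distribution uniform on $\mathbb{H}_0^U$; this set is nonempty by the hypothesis $\pr(W,Z) > 0$. For transitivity, take $H_1, H_2 \in \mathbb{H}_0^U$; since $S_A$ acts transitively on $\mathbb{H}_0$, there is some $\pi \in S_A$ with $H_2 = \pi \cdot H_1$. Equivariance of $U$ then yields
\begin{equation*}
\pi \cdot U \;=\; \pi \cdot U(H_1) \;=\; U(\pi \cdot H_1) \;=\; U(H_2) \;=\; U,
\end{equation*}
so $\pi \in S_U$, and combined with $\pi \in S_A$ we get $\pi \in S_A \cap S_U = S_{AU}$. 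Uniformity on $\mathbb{H}_0^U$ together with transitivity of $S_{AU}$ on $\mathbb{H}_0^U$ is exactly the definition of $S_{AU}$-symmetry, completing the proof.

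The main obstacle is essentially bookkeeping rather than mathematical content: one must keep the random object $U(H)$ notationally distinct from the fixed conditioning value $U$, and verify that the stabilizer calculation really lands in the intersection $S_A \cap S_U$ rather than just in one of the two groups. The substantive content is condensed into the single equivariance identity above; everything else is a standard finite-sample argument that conditioning a uniform distribution on a deterministic event yields a uniform distribution on the restricted support, which then fits directly into the framework of Proposition \ref{link}.
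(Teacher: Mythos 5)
Your proof is correct and follows essentially the same route as the paper's: establish equivariance of the focal-indicator map under permutations, observe that conditioning the uniform $S_A$-symmetric distribution on the level set of $U$ yields a uniform distribution on that level set, and show the connecting permutations must lie in $S_A \cap S_U = S_{AU}$. The only cosmetic difference is that you argue transitivity via two arbitrary elements of the restricted support while the paper writes the support directly as the orbit $S_{AU}\cdot H_0$; the content is identical.
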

\begin{proof}
First recall that due to equivariance in Lemma \ref{equivariance}, the induced $\pr(H)$ is generated by sampling $H = \pi \cdot H_0$, where $\pi \in S_A$. By first part of Proposition \ref{link}, we know that the distribution of exposures $\pr(H)$ is $S_A$-symmetric on its domain $\mathbb{H}$. In particular, it has a uniform distribution on $\mathbb{H}$. 

Notice that the function $u(\cdot)$ depends on $(W,Z)$ only through $H = h(W,Z)$. This makes it possible to define another function $m(\cdot)$ such that $\mathcal{U} = m(H) = m(h(W,Z)) = u(W,Z)$. Since there is a one-to-one mapping between $\mathcal{U}$ and $U$, we can use the two notations interchangeably. The reason that $U$ is a useful representation is that it is an $N-$vector, allowing previous notations of permutation to work out. Here we can write $U = m(H)$.

We have
\begin{align*}
pr(H | \mathcal{U}) &\propto pr(\mathcal{U} |H)pr(H)\\
&\propto pr(\mathcal{U}|H)  \text{ since $pr(H) = \operatorname{Unif}(\mathbb{H}) \propto 1$}\\
&\propto \mathds{1}\{m(H) = U\},
\end{align*}
which implies that $pr(H|\mathcal{U}) = \operatorname{Unif}\{\mathbb{H}(U)\}$ on the support 
\begin{equation*}
\mathbb{H}(U) = \{H \in \mathbb{H}: m(H) = U\}.
\end{equation*}
Now notice that for all $\pi \in S_A$ and any exposure set of interest $h_1$, $h_2$, we have
\begin{align*}
[m(\pi \cdot H)]_i &= \mathds{1}\left([\pi \cdot H]_i \in \mathcal{H}_u\right)\\
&=  \mathds{1}\left(H_{\pi^{-1}(i)} \in \mathcal{H}_u \right) \\
&= [m(H)]_{\pi^{-1}(i)}\\
&= [\pi \cdot m(H)]_i,
\end{align*}
that is, $m$ is equivariant, $m(\pi \cdot H) = \pi \cdot m(H)$. Let $H_0 \in \mathbb{H}(U)$ such that $m(H_0) = U$. We have,
\begin{align*}
\mathbb{H}(U) &= \{H \in \mathbb{H}: m(H)= U\} \\
&= \{\pi \cdot H_0: \pi \in S_A, m(\pi \cdot H_0) = U\} ,\text{ since $S_A$ is transitive on $\mathbb{H}$}\\
&= \{\pi \cdot H_0: \pi \in S_A, \pi \cdot m(H_0) = U\}, \text{ due to equivariance}\\
&= \{\pi \cdot H_0: \pi \in S_A, \pi \cdot U = U\}\\
&= S_{AU} \cdot H_0.
\end{align*}
This shows that $S_{AU}$ is transitive on $\mathbb{H}(U)$, the support of $\pr(H|\mathcal{U})$. Having shown earlier that $\pr(W |\mathcal{U}) = \operatorname{Unif}\{\mathbb{H}(U)\}$, we therefore conclude that $\pr(H|\mathcal{U})$ is $S_{AU}$-symmetric on its support. 
\end{proof}

Since $\pr(H|\mathcal{U})$ is $S_{AU}$-symmetric on its support $\mathbb{H}(U)$, we know by part (2) of Proposition \ref{link}, $\pr(H|\mathcal{U})$ can be generated by sampling $H = \pi \cdot H_0$, where $\pi \sim \operatorname{Unif}(S_{AU})$. Then the second step is to invoke Lemma \ref{SCRD_proof} in proof of Theorem \ref{main} except that we replace the $N$-vector $A$ with $AU$ where $(AU)_i = (A_i, U_i)$. This completes the proof that $\pr(H|\mathcal{U})$ is $SCRD(\mathbf{n}_{A\mathcal{U}})$.

\section{Testing for sharp null hypothesis}
Consider testing the global null hypothesis
\begin{equation*}
H_0: Y_i(\gamma_1,w_1) = Y_i(\gamma_2,w_2), \forall (\gamma_1,w_1),(\gamma_2,w_2) \in \mathcal{H}, \forall i \in \mathbb{I},
\end{equation*}
which asserts that the combined intervention has no effect whatsoever on any unit. We illustrate here how the classical Fisher Randomization Test can be applied to test this sharp null hypothesis. 
\begin{proposition}
Consider observed assignment $C^{obs} \sim \pr(C)$.
\begin{enumerate}
\item Observe outcomes, $Y^{obs}= Y(C^{obs}) = Y(Z(L^{obs}), W^{obs})$, where $C^{obs}_i = (L^{obs}_i, W^{obs}_i)$ for all $i \in \mathbb{I}$.
\item Compute $T^{obs} = T(C^{obs}, Y^{obs})$.
\item For $C' \sim \pr(C)$, let $T' = T(C',Y^{obs})$ and define:
$$
\operatorname{pval}(C^{obs}) = \pr(T' \geq T^{obs}),
$$
where $T^{obs}$ is fixed and the randomization distribution is with respect to $\pr(Z')$.
\end{enumerate}
Then the p-value of $\operatorname{pval}(C^{obs})$ is valid. That is, if $H_0$ is true, then $\pr\{\operatorname{pval}(C^{obs}) \leq \alpha\} \leq \alpha$.
\end{proposition}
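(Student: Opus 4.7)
The plan is to reduce the claim to the standard validity argument for a Fisher Randomization Test, exploiting the fact that under the global null $H_0$ the potential outcomes collapse to a single value per unit. First I would observe that under $H_0$, for each $i \in \mathbb{I}$ there is a constant $y_i$ such that $Y_i(\gamma, w) = y_i$ for all $(\gamma, w) \in \mathcal{H}$. Consequently, the observed outcome vector $Y^{\obs} = Y(C^{\obs})$ does not actually depend on $C^{\obs}$, and we may treat $Y^{\obs} = (y_1, \ldots, y_N)$ as a fixed vector under the null, identical for every realization of the assignment.

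Next I would rewrite the two statistics using this substitution: $T^{\obs} = T(C^{\obs}, Y^{\obs})$ and $T' = T(C', Y^{\obs})$ are both evaluations of the same function $T(\cdot, Y^{\obs})$ at two independent draws $C^{\obs}, C' \sim \pr(C)$. Hence they are identically distributed conditional on $Y^{\obs}$, and in particular exchangeable. The p-value can then be written as
\[
\operatorname{pval}(C^{\obs}) = \pr\bigl(T(C', Y^{\obs}) \geq T(C^{\obs}, Y^{\obs}) \,\big|\, C^{\obs}\bigr),
\]
i.e.\ the right-tail probability of a fixed statistic evaluated at an independent resample from the randomization distribution, with $Y^{\obs}$ held fixed throughout.

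Finally, I would invoke the standard fact that if $X_1, X_2$ are i.i.d.\ real-valued random variables and $p(X_1) = \pr(X_2 \geq X_1 \mid X_1)$, then $\pr\{p(X_1) \leq \alpha\} \leq \alpha$ for every $\alpha \in [0,1]$. Applying this with $X_1 = T^{\obs}$ and $X_2 = T'$ immediately yields the desired $\pr\{\operatorname{pval}(C^{\obs}) \leq \alpha\} \leq \alpha$.

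The main obstacle is conceptual rather than technical: one must make precise that treating ``$Y^{\obs}$ as fixed'' in step 3 of the procedure is legitimate because the sharp null forces $Y(C) = Y(C')$ for every $C, C'$ in the support of $\pr(C)$. Once this imputation step is made explicit, the argument is the classical Fisher Randomization Test boilerplate and does not require the group-theoretic machinery that underpins Theorems~\ref{main} and~\ref{main_2}; in particular, no symmetry or transitivity of $\pr(C)$ is used.
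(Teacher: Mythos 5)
Your proposal is correct. Note that the paper itself states this proposition without any proof; the closest analogue is its proof of Proposition~\ref{validity}, which establishes that the test statistic is \emph{imputable} under the null and then cites Theorem~2.1 of \citet{basse2019randomization}. Your argument is the same in substance but self-contained: you make the imputability step explicit (under the global null, $Y(C)$ is the same fixed vector for every $C$ in the support of $\pr(C)$, so $T^{\obs}$ and $T'$ are i.i.d.\ evaluations of one function at independent draws) and then prove the classical tail bound $\pr\{p(X_1)\le\alpha\}\le\alpha$ for i.i.d.\ $X_1,X_2$ directly rather than invoking it as a black box. Your closing observation is also right and worth keeping: unconditional validity against the sharp global null holds for \emph{any} design $\pr(C)$ one can sample from, so none of the group-theoretic symmetry machinery of Theorems~\ref{main} and~\ref{main_2} is needed here; that machinery is only required for the conditional test of the non-sharp pairwise nulls. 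One small point to make explicit: the collapse $Y_i(C)\equiv y_i$ uses Assumption~\ref{assumption} in addition to $H_0$ --- the global null only equates potential outcomes across exposure values $(\gamma,w)\in\mathcal{H}$, so you need the exposure-mapping restriction $Y_i(Z,W)=Y_i(\Gamma_i,W_i)$ to conclude that the observed outcome vector is invariant to the realized assignment $C^{\obs}$.
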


\section{Balance in optimal design heuristics}
 The naive approach in section \ref{design} only considers the objective of maximizing the total number of units with both target exposures, without requiring balance between the two exposures. We will show how to reformulate the optimization to incorporate balance by adding various constraints. But before that, we want to point out the subtleties in incorporating balance as well as the caveats in incorporating balance in the wrong way or simply ignoring it. 

 Recall that the randomizations in Definition \ref{procedure} are permutations that are in the stabilizer of attribute $A$. This suggests that balance between the two target exposures should be taken into consideration within each category of attribute instead of on the global level across all attribute values. In fact, considering balance between the two target exposures without taking into account of diversity within each attribute class could result in greedy choice that leads to zero power. For example, if all units with the first target exposure are of attribute $a_1 \in \mathcal{A}$ while all units with the second target exposure are of attribute $a_2 \in \mathcal{A}$ for $a_1 \neq a_2$, then permutations in the stabilizer of $A$ do not change the test statistics at all. In this worst case, we will have zero power. Similarly, in the naive approach that neglects the balance between the two target exposures, the same worst case scenarios may happen resulting in zero power. 

 It is worth noting that the correct way to incorporate balance and the heuristics for maximizing power of randomization tests also coincide with the goal of minimizing variance in estimations. From standard theory about estimation of variance, it can be seen that variance estimator is small if the denominators $n_{[a]\gamma,w}$ and $n_{[a]\gamma',w'}$ are large for both target exposures $(\gamma,w)$ and $(\gamma',w')$ within attribute class $a \in \mathcal{A}$. 	This suggests that an optimal design desires large values of both $n_{[a]\gamma,w}$ and $n_{[a]\gamma',w'}$, which can be implemented by maximizing the sum of units with both target exposures, subject to the within-attribute balance constraints. We will now formally state the reformulation of the integer linear programming problem.

 Given target exposures $(\gamma_1,w_1)$ and $(\gamma_2,w_2)$, we know the exact composition of attribute-intervention pair of the neighbors of all units with target treatments. This allows us to enumerate all elements in $\mathcal{G^*}$ and hence pre-compute the constants $m_1(G_i)$, $m_2(G_i)$, and $c_j(G_i)$ for all $G_i \in \mathcal{G^*}$ and $j \in \mathcal{A}$. 

 Assume without loss of generality that $\mathcal{A} = \{0,1\}$. Define the following additional constants
 \begin{equation*}
 A_1(G_i) = 
 \begin{cases}
 1, & \text{if the units with exposure $1$ in group design $G_i$}\\ 
 & \text{has attribute $1$ and } n_{1}(G_i) >0 \\
 0, & \text{if the units with exposure $1$ in group design $G_i$}\\
 &\text{has attribute $0$ and } n_{1}(G_i) >0 \\
 \text{anything} & \text{otherwise}.
 \end{cases}
 \end{equation*}
 And similarly,
 \begin{equation*}
 A_2(G_i) =
 \begin{cases}
 1, & \text{if the units with exposure $2$ in group design $G_i$}\\
 &\text{has attribute $1$ and } n_{2}(G_i) >0 \\
 0, & \text{if the units with exposure $2$ in group design $G_i$}\\
 &\text{has attribute $0$ and } n_{2}(G_i) >0 \\
 \text{anything} & \text{otherwise}.
 \end{cases}
 \end{equation*}

 Therefore the heuristic for maximizing power of the Fisherian inference can be translated as the following optimization problem. 
 \begin{align*}
 \label{optimal}
 \operatorname{argmax}_{\{n_i\}} & \sum_{G_i \in \mathcal{G^*}} n_i (m_{1}(G_i) + m_{2}(G_i))\\
 \operatorname{s.t.} & \sum_{G_i \in \mathcal{G^*}} n_i c_1(G_i) \leq n_{[1]} \\
 & \sum_{G_i \in \mathcal{G^*}} n_i c_0(G_i) \leq N-n_{[1]}\\
 & \sum_{G_i \in \mathcal{G^*}} n_i m_{1}(G_i) A_1(G_i) \leq \eta \cdot \sum_{G_i \in \mathcal{G^*}} n_i m_{2}(G_i) A_2(G_i)\\
 & \sum_{G_i \in \mathcal{G^*}} n_i m_{2}(G_i) A_1(G_i) \leq \eta \cdot \sum_{G_i \in \mathcal{G^*}} n_i m_{1}(G_i) A_2(G_i)\\
 & \sum_{G_i \in \mathcal{G^*}} n_i m_{1}(G_i) (1-A_1(G_i)) \leq \eta \cdot \sum_{G_i \in \mathcal{G^*}} n_i m_{2}(G_i) (1-A_2(G_i))\\
 & \sum_{G_i \in \mathcal{G^*}} n_i m_{2}(G_i) (1-A_1(G_i)) \leq \eta \cdot \sum_{G_i \in \mathcal{G^*}} n_i m_{1}(G_i) (1-A_2(G_i))\\
 & n_i  \geq 0, n_i \in \mathbb{Z}, \forall n_i,
 \end{align*}
 where $\eta  = 1 + \epsilon$ for some $\epsilon >0$ that can be chosen to achieve a satisfiable trade-off between the two objectives of maximizing total number and balancing. 
 This is in the standard form of an integer linear programming problem or a knapsack problem in particular. The general case for attribute value set $|\mathcal{A}| > 2$ can be extended directly from this binary attribute case. 
\begin{remark}
 The heuristics for maximizing power is qualitative and hence the above optimization problem is just one of many ways to realize the heuristic. For example, the tuning parameter $\eta$ can be adjusted by the practitioner to achieve different tradeoffs for maximizing number of units with target treatment and balancing between the two treatments. Different values of $\eta$ can also be used for different balancing constraints as well. 
\end{remark}
\begin{remark}
 Integer linear programing problems are NP-hard and there are established iterative solvers that yield good approximations of the true optimizer. However, in this case, we can get fairly good approximation of the optimal assignment by simply taking one step of linear programming relaxation and rounding downwards. That is, we drop the constraint that $n_i \in \mathbb{Z}$ and solve the simple linear programing problem. Since we are rounding downwards and all coefficients are non-negative, the round-off integer solution is still feasible. This one-step linear relaxation has the advantage that it gives a fast initialization yielding near optimal power among all possible initializations. In particular, it does not scale with the number of units or group sizes as other methods do. 
\end{remark}

\section{Simulation set up}
We compare the power for different initializations leading to different designs. Given a fixed attribute vector $A$, different initializations of latent assignments $C$ will result in different compositions of exposures that are later permuted in the randomization test in Proposition \ref{validity}. Specifically, we want to compare the optimal design described in Section 5 derived from linear programming with random initializations and rejection sampling. A random initialization takes some fixed group assignment and external intervention assignment and permutes them randomly and separately. We also consider two rejection sampling methods for number of iterations $M = 10$ and $1000$. A rejection sampling method in our setting can be described in the following steps. 
\begin{enumerate}
\item generate a random initialization of latent assignments $(L,W)$, and compute the number of units with two target exposures under different attribute classes. Denote $n_{ij}$ the number of units with attribute $i$ and exposure equals target exposure $j$.
\item Repeat for $M$ iterations:
\begin{enumerate}
\item permute $L' = \pi_1 \cdot L$ and $W' = \pi_2 \cdot W$, for $\pi_1, \pi_2 \in \operatorname{Unif}(S)$
\item compute the number of units with target exposures under permuted latent assignments, and denote by $n'_{ij}$. Accept and assign $(L,W) \leftarrow (L',W')$ if 
\begin{align*}
& n_{01} + n_{11}+ n_{02} + n_{12}  < n'_{01} + n'_{11}+ n'_{02} + n'_{12} \\
& 1/\eta \leq n'_{01}/n'_{02} \leq \eta\\ 
& 1/\eta \leq n'_{11}/n'_{12} \leq \eta.
\end{align*}
\end{enumerate}
\end{enumerate}
The result of our simulations is shown in Figure~\ref{fig:simul-figures}. It can be seen that optimal design using linear programming yields higher power than the other initializations for all additive effects and group sizes. The advantage of linear programming is significantly more pronounced when group size increases slightly. This is because as group size increases, the number of possible exposures increases significantly and it is much more difficult for brute force algorithms with a fixed number of iterations to find a near-optimal solution.

\end{document}